\definecolor{lightgray}{rgb}{0.85,0.85,0.85}
\newcommand{\tabsep}{\vrule width .7pt}
\newcommand{\eq}[1]{Eq.~(\ref{eq:#1})}
\newcommand{\be}{\begin{equation}}
\newcommand{\ee}{\end{equation}}
\newcommand{\bea}{\begin{eqnarray}}
\newcommand{\eea}{\end{eqnarray}}
\newcommand{\mean}[1]{\ensuremath{\langle{#1}\rangle}}
\renewcommand{\rho}{\varrho}
\renewcommand{\tilde}{\widetilde}
\theoremstyle{definition}
\newtheorem{theorem}{Theorem}
\newcommand{\expect}[1]{\langle #1 \rangle}
\newcommand{\conv}{\ensuremath \text{conv}}
\newcommand{\rank}{\ensuremath \text{rank}}
\renewcommand{\vec}[1]{\ensuremath \bm{#1}}
\begin{document}

\title{Finding optimal Bell inequalities 
using the cone-projection technique}

\author{Fabian Bernards}
\author{Otfried Gühne}%
\affiliation{
Naturwissenschaftlich-Technische Fakultät, 
Universität Siegen, 
Walter-Flex-Straße 3, 
57068 Siegen, Germany}

\date{\today}

\begin{abstract}
Bell inequalities are relevant for many problems in quantum 
information science, but finding them for many particles
is computationally hard. Recently, a computationally
feasible method called cone-projection technique has been developed to find all optimal 
Bell inequalities under some constraints, which may be given 
by some symmetry or other linear conditions. In this paper
we extend this work in several directions. We use the method
to generalize the I4422 inequality to three particles  
and a so-called GYNI inequality to four particles. Additionally, 
we find Bell inequalities for three particles that generalize 
the I3322 inequality and the CHSH inequality at the same time. 
We discuss the obtained inequalities in some detail and 
characterize their violation in quantum mechanics. 
\end{abstract}

\maketitle

{\section{ Introduction}

Bell inequalities are a tool to test whether experimental 
data is compatible with a local hidden variable (LHV) model 
or not \cite{brunnerbellreview, scarani2019}. Quantum mechanics 
predicts that Bell inequalities can be violated and this has been 
verified experimentally in an unambiguous manner \cite{shalm2015,hensen2015,giustina2015, rosenfeld2017}. The violation of a Bell inequality requires
the presence of entanglement, so Bell inequalities provide
the possibility to detect entanglement in a device-independent
manner. This can be extended to certify certain quantum states 
and measurements, by a procedure called self-testing \cite{supic2019}.
Moreover, in quantum cryptography Bell inequality violations can
certify secret-key rates \cite{holz2019}. All these applications 
exemplify that the violation of a Bell inequality can be useful.
Even beyond that, some Bell inequalities can be interesting if 
they cannot be violated by quantum mechanics. Indeed, in this 
case they may distinguish quantum mechanics from more general 
non-signaling theories \cite{almeida2010}. The wide range of 
applications of Bell inequalities beyond their original purpose 
of refuting LHV theories makes it desirable to find Bell 
inequalities with interesting properties.

This, however, is an arduous endeavour as the correlations
stemming from LHV models form a high-dimensional polytope 
and identifying interesting Bell inequalities amounts to
finding the corresponding facets of this local polytope 
\cite{peres1999}. Finding the facets of a high-dimensional 
polytope given its extreme points is known to be a hard task 
and its computational complexity quickly renders it intractable 
as the number of parties, measurements per party or outcomes 
per measurement rise \cite{pitowsky1991}.

Furthermore, the number of facet-defining Bell inequalities increases 
rapidly, making it difficult to identify the inequalities of interest.
To give an example, there are only eight non-trivial Bell inequalities 
all of which are versions of the famous Clauser-Horne-Shimony-Holt (CHSH) 
inequality \cite{clauser1969, clauser1970erratum} in the scenario with 
two parties and two different dichotomic measurements per party each.
For the same number of parties, but three dichotomic measurements per 
party, there are already 648 non-trivial facet-defining inequalities, 
72 of which are of the CHSH type and 576 are variations of the so-called 
I3322 inequality. The latter inequality was first identified by Froissart \cite{froissart1981} and later independently by \'Sliwa \cite{sliwa2003} 
and Collins and Gisin \cite{collins2004}. If one increases the number of 
parties, for three parties with two dichotomic measurements each, there 
are 53856 facet-defining inequalities and 46 inequivalent classes of Bell
inequalities \cite{sliwa2003}. Going to even more complex scenarios it 
is impossible so far to compute all the facets. 

Luckily however, in practice we rarely need a complete characterization 
of the local polytope. Instead, we seek Bell inequalities with properties 
that are suitable for a specific purpose. In a recent work \cite{bernards2020}
we proposed a general method for this problem. Henceforth we refer to this
method as the cone-projection technique (CPT). The CPT can be used
to find all optimal Bell inequalities obeying some affine constraints. 
Specifically, one important task that can be addressed with the CPT is finding 
generalizations of a Bell inequality.

In this paper we present a an extended description of the CPT and use
it then to study several scenarios. First, we present a detailed analysis 
of the properties of three-particle generalizations of the I3322 inequality 
previously found \cite{bernards2020}. Second, we study generalization of the 
so-called I4422 inequality \cite{collins2004} to three particles. Third, we 
find and investigate three-particle Bell inequalities that are generalizations 
of the CHSH inequality and the I3322 inequality at the same time. Finally, 
we study generalizations of three-particle Guess-Your-Neighbors-Input (GYNI) 
inequality \cite{almeida2010} to four particles. 

Before embarking into the technical description of the CPT and the 
examples, it may be useful to explain the notion of a generalization 
of a Bell inequality to more particles. Let us consider an example.
Mermin's inequality, which reads 
\begin{align}
\mean{A_1 B_1 C_2} +
\mean{A_1 B_2 C_1 } +
\mean{A_2 B_1 C_1} -
\mean{A_2 B_2 C_2} 
\le 2
\label{eq-mermin}
\end{align}
can be seen as a generalization of the CHSH inequality
\begin{align}
\mean{A_1 B_1} +
\mean{A_1 B_2} +
\mean{A_2 B_1} -
\mean{A_2 B_2} 
\le 2.
\label{eq-chsh}
\end{align}
This should be understood in the following way: 
If Charlie fixes his outcomes $\pm 1$ for the measurements 
$C_1$ and $C_2$ deterministically (instead of performing 
an actual measurement) then Alice and Bob perform essentially 
a CHSH test on the two-body marginal of the shared three-party 
state. Specifically, if we insert $C_1 = C_2 = 1$ into Mermin's
inequality, it reduces to the CHSH inequality. We therefore 
say that Mermin's inequality is reducible to the CHSH inequality. 
Moreover, both the CHSH inequality and the Mermin inequality are 
define facets on the local polytope. Both properties combined 
make the Mermin inequality a generalization of the CHSH inequality. 
In some situations, these two conditions are too weak to find all 
the Bell inequalities that satisfy them, then one may impose some 
symmetry constraints in addition; this can also be handled
by the CPT. In the above example one may for instance 
impose invariance under arbitrary permutations of parties. 

\section{Description of the cone-projection technique (CPT)}
\label{method}
In this section, we provide a detailed description of the method presented
in Ref.~\cite{bernards2020}. The reader who is already familiar with it may
directly skip to the next section.


\begin{figure}[t]
\includegraphics[width=.3\textwidth]{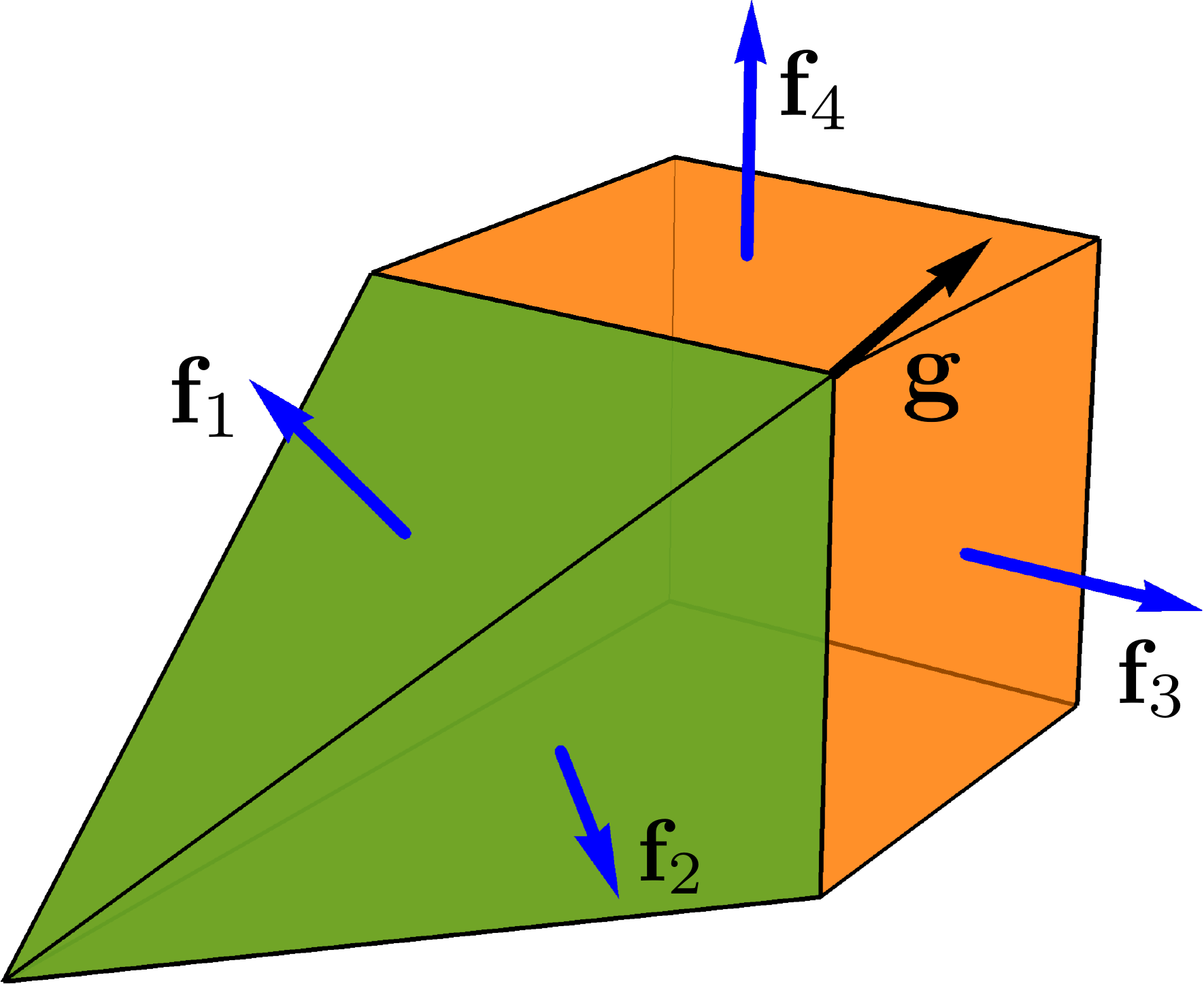}
\caption{A polytope in three-dimensional space is defined by its two-dimensional
surfaces, the so-called facets. Here, the normal vectors of the facets are
depicted as blue vectors. The vector $\vec g$ (black arrow) may be used to define 
a condition on the facets, be requiring that the facet normal vectors $\vec
f_i$
must enclose some absolute angle $\alpha$. For instance, this angle may be chosen such
that the facets drawn in green meet the constraints while the others drawn in orange 
do not. }
\label{fig-1}
\end{figure}

\subsection{Overview and motivation}
We wish to find a method that allows to compute facets of a polytope 
which is defined by its vertices, where the facet normal vectors obey 
some linear constraints. This can be interpreted geometrically as the 
condition that the facet normal vectors must have a fixed inner product
with some vector $\vec g$ that represents the condition, see also 
Fig.~\ref{fig-1}. A naive method to tackle this problem is to compute
first all facets and then to find out which of them obey the constraint. 
For the problems we consider, however, this is not feasible, as it is
already impossible to compute all the facets.

Note that constraints of the considered type include two important 
special cases of constraints that are important in this work: 
The condition that some given vertex of the polytope should lie 
on the desired facet and the condition that the normal vector 
should be symmetric under some linear transformation. In the 
first case, the position vector of the vertex in question takes 
the role of $\vec g$. To understand this, consider a Bell inequality
$\vec x^T \vec b \le -\beta$ with facet normal vector $\vec b$ that should
hold for all classical behaviors $\vec x$. Demanding that some vertex $\vec g$
lies on the facet means that $\vec g^T \vec b = -\beta$, which is an affine
constraint. In the second case all points that obey 
the symmetry lie in a plane and the normal vectors of the desired 
facets have to lie in this plane as well. This is illustrated in Fig.~\ref{fig-symm}.

\begin{figure}[t]
\includegraphics[width=.3\textwidth]{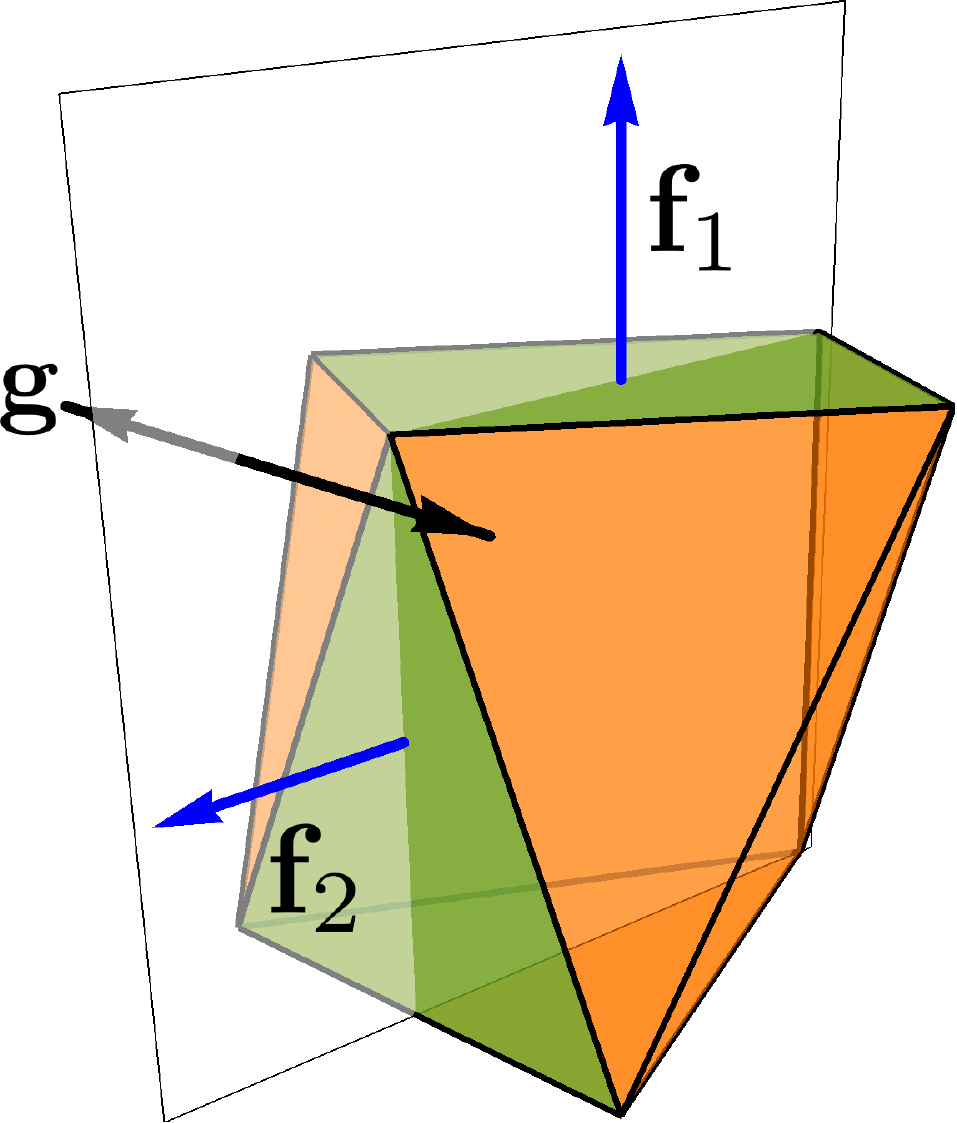}
\caption{This example shows how symmetries can be formulated 
as affine constraints. We seek facets with normal vectors that 
are invariant under reflection on a plane (white, half-transparent). 
This is satisfied by the normal vectors $\vec f_1, \vec f_2$ (blue) of two of the facets 
(green). The condition is equivalent to demanding that the normal 
vectors be perpendicular to the normal vectors (black arrows) of 
the mirror plane.}
\label{fig-symm}
\end{figure}

\subsection{Description of convex cones and polytopes}
For formulating the method in a precise manner, we 
need some more terminology \cite{ziegler}. In detail, 
we need to define the notions of polytopes, cones and 
polyhedra, as well as their so-called V-representation
and H-representation. Finally, we define facets and 
discuss the dimensions of these objects.

The two most fundamental concepts in this context are 
{\it conic combinations} and {\it convex combinations}. 
Conic combinations are linear combinations with positive 
coefficients. Convex combinations are linear combinations
with positive coefficients where the coefficients sum up 
to one. Given a set $V$ of vectors, the set of all conic 
(convex) combinations of the elements in $V$ is called the 
conic (convex) hull of $V$. Also, the conic (convex) hull 
of $V$ is said to be generated by $V$ under conic (convex) 
combinations. If $V$ is finite, its convex hull is called 
finitely generated. 

These notions give rise to the two main objects that are
studied in the following, {\it convex cones} and {\it convex 
polytopes}. The first 
are assumed to be finitely generated under conic combinations 
and the second being finitely generated under convex combinations. 
The elements of $V$ are called {\it vertices} in the case of  
polytopes. For cones, they are called {\it rays}.

The notion of polytopes and cones can be unified using
the concept of {\it polyhedra}. First, given two sets $A$ 
and $B$, one can define their so-called Minkowski sum 
as $A+B = \{a+b | a \in A, b \in B\}$. The sum of a cone 
and a polytope is called a polyhedron. The sets of rays and 
vertices that generate the polyhedron are called the
{\it V-representation} of the polyhedron. Fig.~(\ref{fig-minkowski})
illustrates the Minkowski sum of a line segment and a ray.

\begin{figure}[t]
\includegraphics[width=.7\columnwidth]{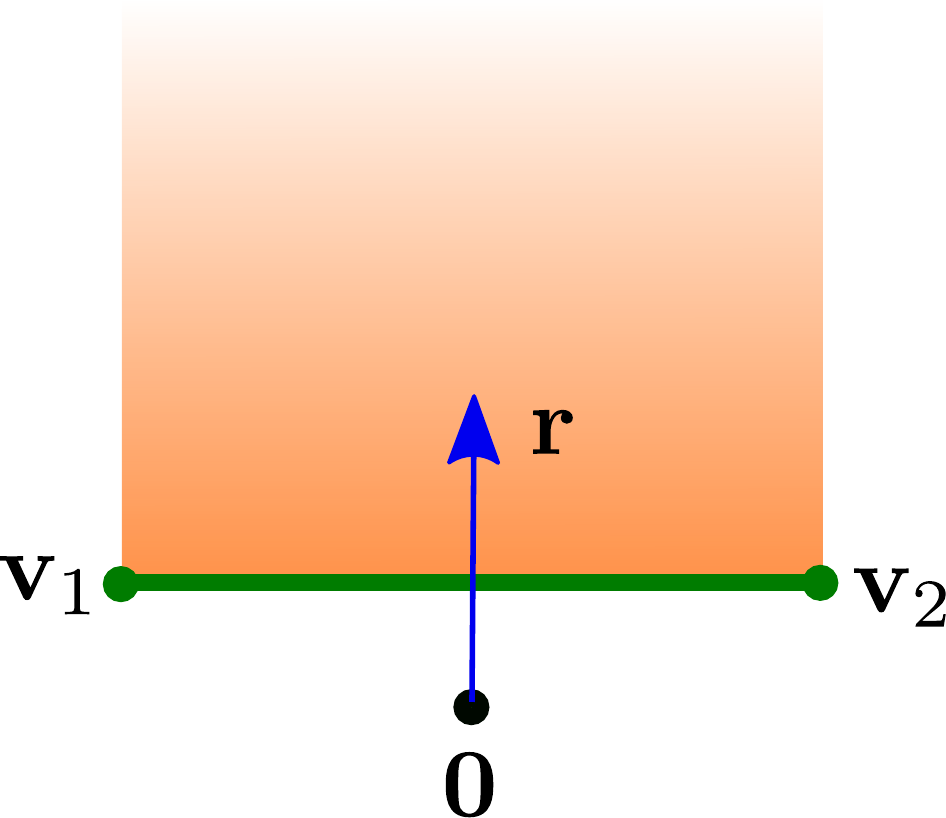}
\caption{The vertices $\vec v_1$, $\vec v_2$ (green points)
define a 1-dimensional polytope (green bold line). The Minkowski sum
of this polytope and the ray $\vec r$ (blue) is a polyhedron (orange), which
extends infinitely to the top. The polyhedron has three facets, which 
are the
lines that confine it to the bottom, the left, and the right.}
\label{fig-minkowski}
\end{figure}

That said, there is a second important representation of 
polyhedra, the {\it H-representation.} Any polyhedron can
be viewed as the intersection of a finite number of half-spaces, 
defined by affine inequalities. The minimal set of half-spaces 
whose intersection is the polyhedron is its H-representation. 

Let us finally introduce the concept of a facet. With every of the 
half-spaces in the H-representation, we can associate the hyperplane 
that bounds it. The intersection between such a hyperplane and 
the polyhedron is called a {\it facet} of the polyhedron. If an 
inequality defines a half-space such that its bounding hyperplane 
contains a facet of a given polyhedron, this inequality is called 
{\it facet-defining} with respect to the polyhedron. From a physical 
perspective, such facet-defining inequalities are the most interesting
Bell inequalities. 

If a polyhedron $P$ has dimension $D$, its facets are the $D-1$ 
dimensional polyhedra that together form the boundary of $P$. 
Intersections of facets are called faces. In the following, 
an $n$-face is a face that has an (affine) dimension of
$n$. Accordingly, in the case of a $D$ dimensional polytope, 
$0$-faces are vertices, $1$-faces are edges and $(D-1)$-faces 
are facets.

\subsection{Targeted search for facets}
Now we are in the position to describe the problem that the cone-projection
technique described in
Ref.~\cite{bernards2020} solves. We consider the situation where 
a convex polytope $P$ is given in its V-representation and we aim to find 
all facets of $P$ that satisfy some linear conditions. The CPT works 
for the case that the conditions are affine 
equality constraints on the coefficients of the facet defining 
inequality. Note that the CPT can be applied to any convex polyhedron, but in
practice we are interested in the local polytope, where the vertices 
are given by deterministic assignments of measurement results in a
LHV model.

\begin{figure}[t]
  \subfloat[\label{sfig:1a}]{\includegraphics[width=0.17\textwidth]{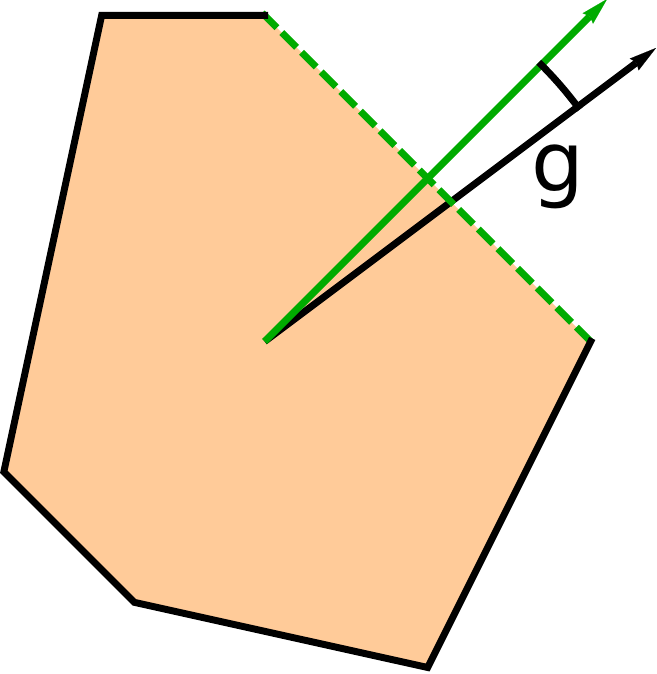}}
  \hspace{0.3cm}
  \subfloat[\label{sfig:1b}]{\includegraphics[width=0.25\textwidth]{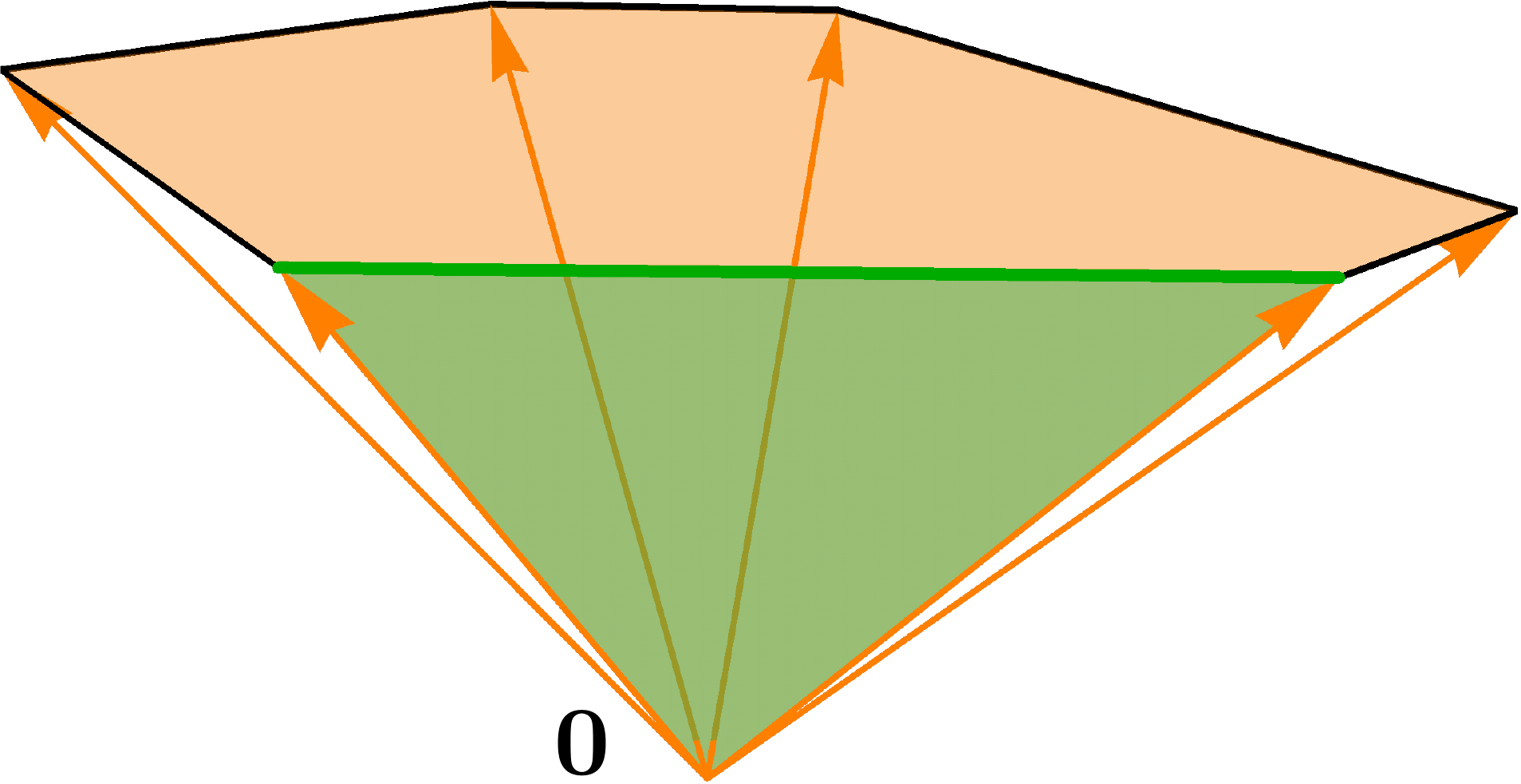}}
  
   \subfloat[\label{sfig:1c}]{\includegraphics[width=0.22\textwidth]{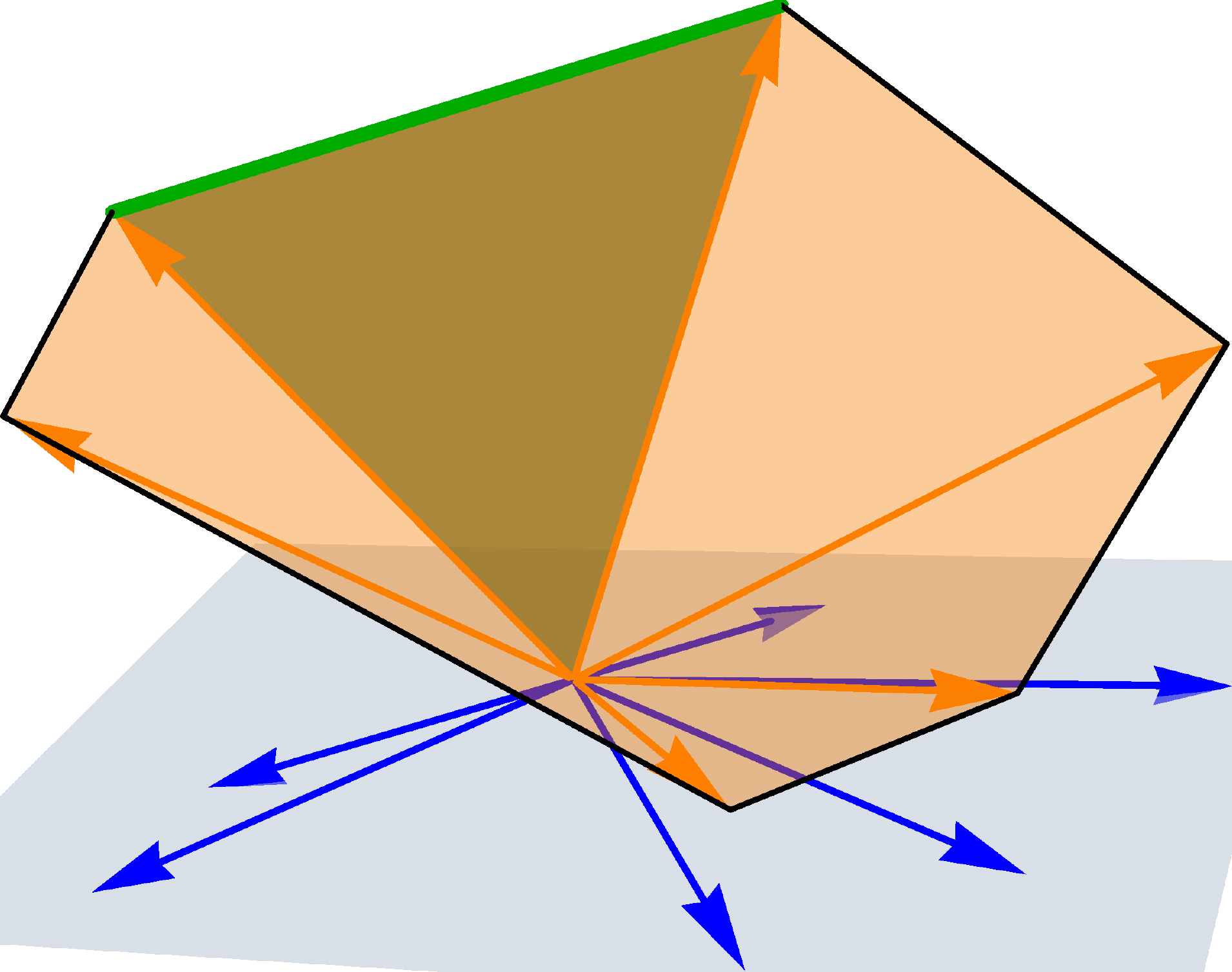}}
  \hspace{0.3cm}
  \subfloat[\label{sfig:1d}]{\includegraphics[width=0.21\textwidth]{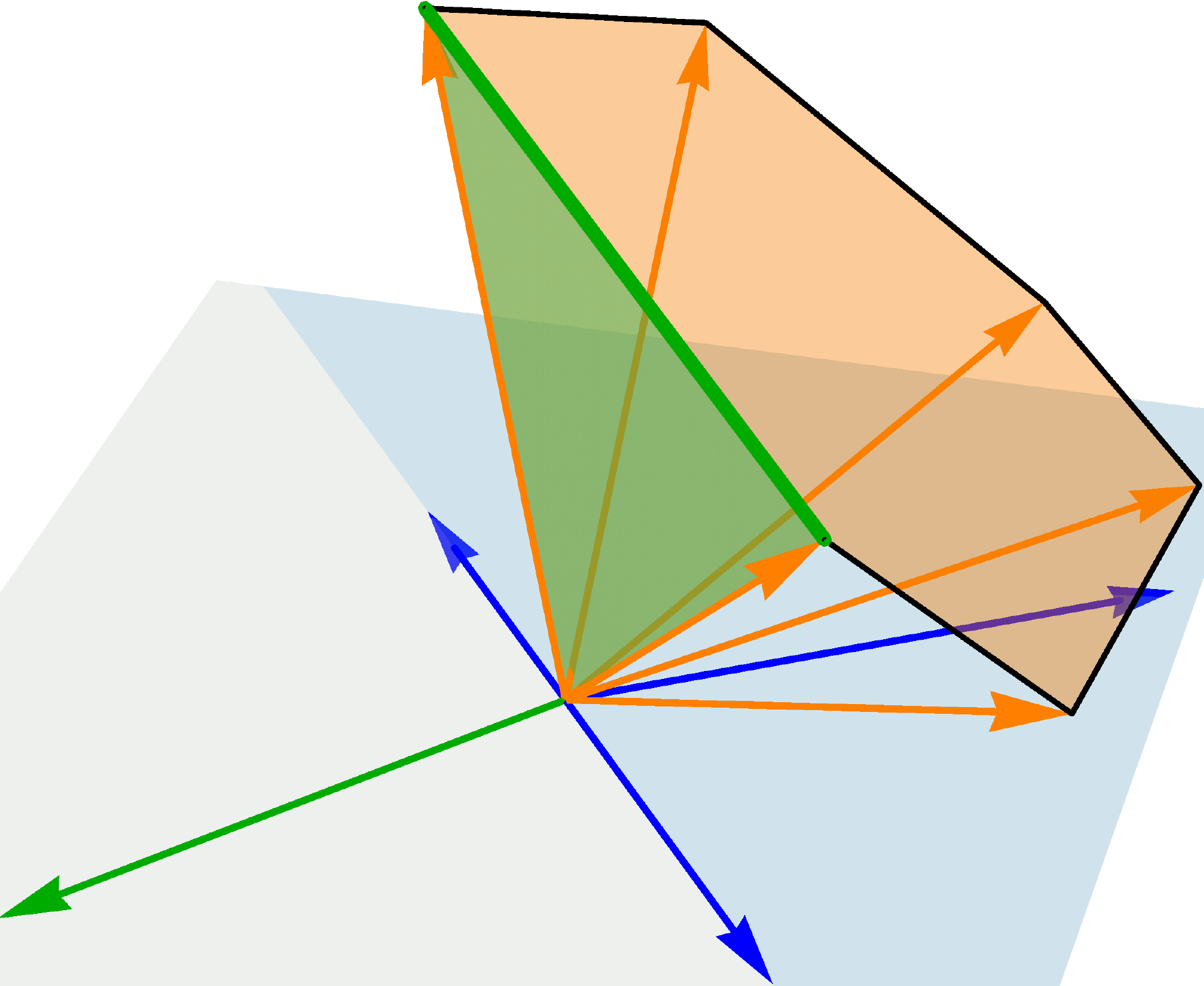}}
  
 \caption{Visualization of the CPT to find facets of a polytope
where the normal vector obeys some constraints. 
(a) We aim to find the facets of the two-dimensional polytope $P$ 
where the normal vector $\vec{b}_P$ has a fixed scalar product with 
some vector $\vec{g}$. The facet and the normal vector that fulfill 
this constraint in the given example are shown in green. 
(b) We embed the polytope $P$ in a plane in three-dimensional space, 
where the plane does not contain the origin $\vec 0$. The vertices of $P$ define 
rays (orange arrows) that define the cone $C$. The polytope $P$ is then 
the intersection of the cone $C$ with the plane, and each of its facets 
relates to a facet of $C$ (green) in a unique manner. 
(c) The initial constraint on the facet of $P$ can be translated to conditions
on the facets of the cone. A facet of $P$ fulfills the constraint if and only 
if the corresponding facet of $C$ has a normal vector $\vec{b}_C$ that obeys 
a linear constraint $G \vec{b}_C=0$, where $G$ is some matrix. Geometrically, 
this means that $\vec{b}_C$ has to lie in a certain plane (light-grey). Then, 
we project the rays of $C$ (orange) into that plane (blue arrows) to define 
a cone $\tilde C$ in the low-dimensional plane by taking projected rays as 
generators. By construction, facets of $C$ which obey the constraint
are also facets of the projected cone $\tilde C$.
(d) Finally, we find the facets of $\tilde C$ and check which ones 
correspond to facets of $C$. From the facets of $C$ that meet the 
conditions we can then compute the corresponding facets of $P$. 
In our example, $\tilde C$ is a half plane (light-blue) and has 
only one facet with the normal vector in green. It is also the 
normal vector of a facet of $C$ (green). Note that in the given 
example $\tilde C$ is already generated by three rays and the other three 
rays are redundant.
}
  \label{fig1}
\end{figure}

We consider a $D$-dimensional polytope $P$ and affine 
conditions on the facet normal vectors. Each of these
conditions can be written in the form
\begin{align}
\vec g^T_k \vec b_P = \gamma_k,
\label{eq-startingcond}
\end{align}
that is, the normal vector $\vec{b}_P$  has a fixed scalar 
product with some vector $\vec{g}_k$, see Fig.~\ref{fig1}(a). 
The task is to find {\it all} the facet normal vectors that 
obey these constraints.

In the first step, we construct a cone $C$ in $D+1$-dimensional 
space that maintains a one-to-one correspondence to the polytope. 
One way to achieve this is to prepend one fixed coordinate to 
every vertex $\vec v_i$,
\begin{align}
\vec v_i \mapsto \vec w_i = \binom{1}{\vec v_i}.
\end{align}
and to define the cone $C$ as the conic hull of the rays 
$\vec w_i$. In this way, the polytope can be seen as the 
intersection of the cone $C$ with the hyperplane defined 
by $x_0 = 1$. This relationship is illustrated in Fig.~\ref{fig1}(b). 
Notably, there is a one-to-one correspondence between 
the facets of the cone and those of the polytope, and 
a normal vector $\vec{b}_P$ of a polytope facet translates 
to a normal vector $\vec{b}_C$ of a cone facet. This is 
easy to formulate in the H-representation of the polytope 
and the corresponding cone. Let
\begin{align}
\vec x^T \vec b_P \le -\beta
\end{align}
be a facet-defining inequality of the polytope. Then, 
\begin{align}
(1 \; \vec x^T) \binom{\beta}{\vec b_P} \le 0
\end{align}
is the corresponding facet-defining inequality of the 
cone. Note that then $\binom{\beta}{\vec b_P}$ is the normal
vector of the facet, and any normal vector of a facet can be written
in this way. 

The correspondence between the polytope and the cone 
enables us to work with the cone instead of the polytope. 
This construction also allows to write the conditions in
Eq.~(\ref{eq-startingcond}) in a linear form, namely
\begin{align}
(g_{0,k} \; \vec g^T_k) \binom{\beta}{\vec b_P} = 0,
\end{align}
where we set $g_{0,k} = -\frac{\gamma_k}{\beta}$. 

Collecting all the facet conditions yields the 
linear matrix equation
\begin{align}
G \vec{b}_C=0, \label{eq-conecond}
\end{align}
where the $k$-th row of $G$ is the row vector $(g_{0,k} \; \vec g^T_k)$ 
and the facet normal vector of the cone is $\vec{b}_C = \binom{\beta}{\vec b_P}$.
Geometrically speaking, Eq.~(\ref{eq-conecond}) defines a hyperplane 
through the origin in which the facet normal vectors of the cone must lie 
in order to comply with the conditions in Eq.~(\ref{eq-startingcond}), see also
Fig.~\ref{fig1}(c).

The key observation is that in this situation we can define a new cone
$\tilde C$, such that if $\vec{b}_C$ is a facet normal vector of $C$ 
that obeys the constraints, then $\vec{b}_C$ is also a facet normal 
vector of $\tilde C$. This is done by projecting the rays of $C$ down 
to the subspace of vectors obeying $G \vec{v}=0$, see Fig.~\ref{fig1}(c) 
and Fig.~\ref{fig3ext}.

\begin{figure}[t]
\includegraphics[width=.3\textwidth]{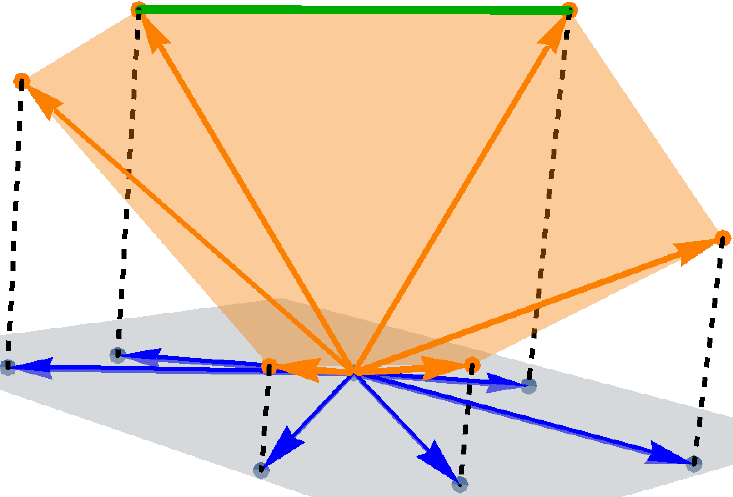}
\caption{Detailed view of the of the projection of the rays of 
the cone $C$ onto the plane where $G \vec{b}_C=0$. The projected 
rays (in blue) generate a new cone $\tilde C$. Depending on the 
conditions, this new cone may have a significantly reduced dimension
compared with $C$.}
\label{fig3ext}
\end{figure}

This subspace is the kernel of $G$ and it is spanned by a set of
$K$ vectors $\vec t_i$, where $K$ is the dimension of the kernel. 
The advantage of $\tilde C$ is that its dimension $K$ is typically 
considerably smaller than the dimension of $C$. Additionally, $\tilde C$ 
has typically much less rays than $C$. That makes it easier to find 
all the facets of $\tilde C$, compared with $C$, see Fig.~\ref{fig1}(d).

In practice, the first step in the construction of $\tilde C$ is to 
define the $(D+1)\times K$ matrix $T$, whose $K$ columns of length 
$D+1$ are given by the vectors $\vec t_j$, so we have $T_{ij} =
[\vec t_j]_i$. With this, we define the rays of $\tilde C$ as 
\begin{align}
\vec{\tilde w_i}^T = \vec{w_i}^T T.
\end{align}
Note that this is a projection if the vectors $\vec t_j$ 
form an orthonormal basis of the kernel of $G$. However, this is 
in general not necessary and in practice it can be
preferable to pick vectors $\vec t_j$ with integer 
coefficients, so that the rays of $\tilde C$
have integer coefficients, if the rays of $C$ have 
integer coefficients. In this way, one does not need
to worry about the precision of the numerical 
calculations.

Similarly to the rays of $C$, we can express its 
facet normal vectors $\vec b_C$ (that satisfy the
constraints) in the basis $\vec t_j$ of the kernel of 
$G$ as
\begin{align}
\vec b_C = T\vec b_{\tilde C}, \label{eq:btilde}
\end{align}
where $\vec b_{\tilde C}$ is a vector of dimension $K$.

The following theorem is the central result and 
establishes the previously claimed relation between 
$\tilde C$ and $C$. Namely, it states that any time 
$\vec b_C$ is a facet normal vector of $C$ that 
satisfies the conditions, $\vec b_{\tilde C}$ is 
a facet normal vector of $\tilde C$. In this way, the 
facet normal vectors of $\tilde C$ are the only relevant
vectors that one needs to consider.

\begin{theorem}
Let $C = \conv(\{\vec w_i\})$ be a cone and  $\vec b_C$ 
a facet normal vector of $C$ that satisfies $G \vec b_C = 0$ 
for some matrix $G$. With $T$ and $\vec b_{\tilde C}$ defined 
as above, we define the cone $\tilde C = \conv(\{\vec{\tilde w_i}\})$ 
of dimension $K$ with $\vec{\tilde w_i}^T = \vec{w_i}^T T$. 
Then $\vec b_{\tilde C}$ defines a facet of $\tilde C$.
\end{theorem}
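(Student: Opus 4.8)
The plan is to transfer the two defining properties of a facet — validity of the inequality and the correct dimension of the set of tight rays — from $C$ to $\tilde C$ by exploiting a single elementary identity. Recall that $\vec b_C$ being a facet normal vector of the $(D+1)$-dimensional cone $C$ means two things: first, $\vec w_i^T \vec b_C \le 0$ for every generating ray, so the inequality is valid on all of $C$; and second, the tight rays $\{\vec w_i : \vec w_i^T \vec b_C = 0\}$ span a $D$-dimensional subspace, which is the facet. To prove the claim I must establish the analogous statements for $\tilde C$ in $\R^K$: that $\vec{\tilde w}_i^T \vec b_{\tilde C} \le 0$ for all $i$, and that the tight rays of $\tilde C$ span a $(K-1)$-dimensional subspace.

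The first and key step is that $T^T$ preserves the relevant inner products. Using $\vec{\tilde w}_i^T = \vec w_i^T T$ and the representation $\vec b_C = T \vec b_{\tilde C}$ from Eq.~(\ref{eq:btilde}), one computes
\begin{align}
\vec{\tilde w}_i^T \vec b_{\tilde C} = \vec w_i^T T \vec b_{\tilde C} = \vec w_i^T \vec b_C
\end{align}
for every $i$. This identity immediately yields validity, since the right-hand side is $\le 0$ by assumption, and it shows that the index set of tight rays is literally the same for both cones, $F := \{i : \vec w_i^T \vec b_C = 0\} = \{i : \vec{\tilde w}_i^T \vec b_{\tilde C} = 0\}$.

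It remains to count dimensions, which I expect to be the heart of the argument. Let $W_F = \Span\{\vec w_i : i \in F\}$. Because $\vec b_C$ defines a facet of $C$, this space is $D$-dimensional, and as every generator is orthogonal to $\vec b_C$ it coincides with the hyperplane $\vec b_C^\perp$. The tight rays of $\tilde C$ span exactly $T^T(W_F)$, so I must show $\dim T^T(W_F) = K-1$. Applying rank-nullity to $T^T$ restricted to $W_F$ gives $\dim T^T(W_F) = \dim W_F - \dim(W_F \cap \ker T^T)$; since the columns of $T$ form a basis of $\ker G$, the map $T$ has rank $K$ and $\ker T^T$ has dimension $(D+1)-K$. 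The decisive point is the inclusion $\ker T^T \subseteq W_F$: for any $\vec x$ with $T^T \vec x = 0$ one has $\vec x^T \vec b_C = \vec x^T T \vec b_{\tilde C} = (T^T \vec x)^T \vec b_{\tilde C} = 0$, so $\vec x \in \vec b_C^\perp = W_F$. Hence $W_F \cap \ker T^T = \ker T^T$ has dimension $(D+1)-K$, and therefore $\dim T^T(W_F) = D - \big((D+1)-K\big) = K-1$, as required.

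A final bookkeeping remark: I should confirm that $\tilde C$ is genuinely $K$-dimensional, so that a $(K-1)$-dimensional face really is a facet rather than the whole cone. This holds because the $\vec w_i$ span $\R^{D+1}$ (as $C$ is full-dimensional), whence the $\vec{\tilde w}_i = T^T \vec w_i$ span the image of $T^T$, which is all of $\R^K$. The only genuine obstacle is the dimension count above, and within it the inclusion $\ker T^T \subseteq W_F$; everything else follows directly from the inner-product identity.
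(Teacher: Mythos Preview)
Your proof is correct and follows the same three-step structure as the paper's: establish the inner-product identity $\vec{\tilde w}_i^T \vec b_{\tilde C} = \vec w_i^T \vec b_C$, deduce validity, and then count the dimension of the tight face. The only difference is cosmetic: where the paper shows $\rank(BT)=K-1$ by contradiction (two independent kernel vectors would map under the injective $T$ into the one-dimensional $\ker B$), you obtain the same rank directly via rank--nullity together with the inclusion $\ker T^T \subseteq \vec b_C^\perp = W_F$; these are dual linear-algebra packagings of the same fact.
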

\begin{proof}

We prove the statement in three steps.
(1) The inequality $\vec{\tilde w_i}^T \vec b_{\tilde C} \le 0$ holds, 
since \eq{btilde} together with the definition
of the $\vec{\tilde w_i}$ implies 
\begin{align}
\vec{\tilde w_i}^T \vec b_{\tilde C} = \vec{w_i}^T \vec b_C \label{eq:tildenotilde}
\end{align} and $\vec{w_i}^T \vec b \le 0$ because $\vec b_C$ is facet defining.

(2) The vector $\vec b_{\tilde C}$ defines a face of $\tilde C$, as one can directly see from \eq{tildenotilde}.
With $K = \dim(\ker G)$, the dimension of the face is at most $K-1$, since it is
contained in the $K-1$ dimensional subspace $\{\vec x\,| \,\vec x^T \vec b_{\tilde C} = 0\}$.

(3) The vector $\vec b_{\tilde C}$ defines a facet of $\tilde C$. That is, the dimension of the face is exactly $K-1$. 

Let $B$ be the $M \times(D+1)$ matrix that contains all M rays $\vec w_i$ as rows that
 fulfill $\vec w_i^T \vec b_C = 0$. Since $\vec b_C$ is a facet normal vector,
$B$ has rank $D$. Accordingly, $BT$ is the $M \times K$ matrix that contains all rays $\vec{\tilde w_i}^T$ as 
rows  that fulfill $\vec{\tilde w_i}^T \vec b_{\tilde C} = 0$.
Showing that $\vec{\tilde b}$ defines a facet is equivalent to showing that $\rank(BT) = K-1$. We now prove the latter by contradiction.
Assume there exist two linearly independent vectors
$\vec{\tilde b}, \vec{\tilde c}$ that satisfy $BT\vec{\tilde b} =
BT\vec{\tilde c} = 0$. Thus, $T\vec{\tilde b}$ and $T\vec{\tilde c}$ lie
in the kernel of $B$. Since $\rank(B) = D$, the kernel is one-dimensional, so
we can write $T \vec{\tilde c} = \ell T \vec{\tilde b}$ for some real number
$\ell$. This implies $T(\vec{\tilde c} + \ell \vec{\tilde b}) = 0$. Because
$\vec{\tilde b}$ and $\vec{\tilde c}$ are linearly independent, the kernel
of $T$ has at least dimension one, which is impossible because $T$ has full
column rank.
\end{proof}

The facets of interest of the polytope $P$ can now be found by finding 
the facets of $\tilde C$ first, calculating potential facets of $C$ via 
\eq{btilde}, transforming these into potential facets of $P$ and finally 
checking which of the found inequalities define  facets of $P$. Note that
it is computationally simple to check whether a given candidate is a facet, 
one just needs to compute the dimension of the surface. 

\subsection{Some facts about the local polytope}
After discussing polytopes in general we should also address 
the class of polytopes of interest in this paper, namely local 
polytopes. The aim of a Bell test is to discriminate between 
classical and non-classical behaviors. As the name suggests, 
the behavior of a system in a Bell test captures the relevant 
data from the Bell test experiment. A behavior is a vector
consisting of so-called correlations. In general, a correlation 
is the conditional probability of some combination of local 
measurement outcomes given a combination of local measurements 
that are performed in a time-like separated
manner. This includes the case that some parties may not perform 
any measurement on their subsystem, in which case the measured 
correlation is called a marginal correlation, in contrast to a 
full-body correlation, which captures the case in which every party performs a non-trivial measurement. 

Note that there is no need to consider the conditional probabilities for all possible combinations of outcomes given a combination of local measurement settings. The reason for this is that there are two rules that hold for the behaviors. The first rule is that the probabilities for the different combinations of local measurement outcomes given the same local measurement settings will sum up to one. The second rule is the no-signaling principle. It states that the probability for any local measurement outcome does not depend on the choices for
the measurement settings on the other subsystems. This reduces the dimension of  a behavior to 
\begin{align}
D(n, \vec m, o) = \prod_{i=1}^n \left(\sum_{j=1}^{m_i}(o_{ij} - 1) +1\right) - 1,
\label{eq:dim}
\end{align} 
where $n$ denotes the number of parties, $m_i$ is the number of 
measurement settings of party $i$ and $o_{ij}$ is the number of 
possible measurement outcomes for setting $j$ of party $i$ \cite{pironio2005}. The numbers $n, m_i, o_{ij}$ define a so-called
scenario for a Bell test. All scenarios considered in this paper 
feature dichotomic measurements exclusively, so $o_{ij} = 2, \; \forall i,j$. 

In this case it is sufficient to consider the expectation value
of the product of a combination of local measurement outcomes, 
given the measurement settings as a correlation instead of a conditional probability. One readily verifies that
this leads to the correct number of independent correlations according to
\eq{dim}.

Classical behaviors behave in accordance with a local hidden variable (LHV)
model. In a LHV model, the local measurement outcomes are fundamentally
predetermined by a local hidden variable $\lambda$. However, if the local hidden variable takes on different values through different runs of an experiment, this will cause the observed behavior to be a mixture of the local deterministic
behaviors. Mathematically speaking, the classical behaviors lie in a polytope,
the local polytope, which is defined as the convex hull of its vertices, the
local deterministic behaviors. Therefore, in order to define the local polytope
for a specific scenario, one must enumerate all local deterministic behaviors,
where each local deterministic behavior corresponds to one combination of local 
deterministic assignments of a measurement outcome to each measurement setting,
yielding 
\begin{align}
N = \prod_{i,j} o_{ij}
\end{align}
vertices in total.

\section{Results}
In this section, we present detailed results 
on generalizations of existing Bell inequalities 
to more parties using the cone-projection technique. 
We start with general considerations, then we 
present details on the found Bell inequalities 
for four relevant scenarios.

\subsection{General considerations}
For simplicity, we only consider Bell inequalities for 
scenarios in which every measurement has two possible 
outcomes. This allows us to write the Bell inequalities 
in a notation using observables. To simplify the notation 
of Bell inequalities that include marginal terms, that is, 
correlations that do not involve all parties, we define the
zeroth setting of every party to refer to a measurement 
that always yields the result $1$. Hence, $\langle A_1 \rangle$ 
is equivalent to $\langle A_1 B_0 C_0 ... \rangle$. Moreover, 
this allows us to write any three-party Bell inequality in the 
form $\sum_{ijk} b_{ijk} \langle A_i B_j C_k \rangle \ge 0$, 
where $b_{000}$ accounts for the constant term. 

As we already explained in the introduction, we call a 
facet-defining $(n+1)$-party Bell inequality that is 
reducible to an $n$-party Bell inequality a generalization 
of the latter. In mathematical terms, we can formulate this 
condition for a facet-defining bipartite Bell inequality  
that is to be generalized to a three-party Bell inequality 
as follows. The three-party Bell inequality 
$\sum_{i,j,k} b_{ijk} \langle A_i B_j C_k\rangle \ge 0$ 
is called a generalization of the Bell inequality 
$\sum_{ij} \beta_{ij} \langle A_i B_j \rangle \ge 0$ if it is
facet-defining and there exist deterministic choices $\xi_k
\in \{-1,1\}$ for the outcomes of Charlie such that
\begin{align}
\sum_{i,j,k} b_{ijk} \xi_k \langle A_i B_j \rangle = \sum_{ij} \beta_{ij}
\langle A_i B_j \rangle .
\end{align} 
This condition implies the following: If
$\gamma$ with $\gamma_{ij} = \langle A_i B_j \rangle$ is a 
behavior that saturates the two party Bell inequality, that 
is $\sum_{ij} \beta_{ij} \gamma_{ij} = 0$, then the behavior 
$g_{ijk} = \gamma_{ij}\xi_k$ will saturate its generalization. 
Since $g$ is a three-party extension of the behavior $\gamma$, 
we call it an extended behavior. 

Note that the condition that the extended behavior should saturate the generalized Bell inequality is a linear constraint and can thus be 
handled with the CPT. However, there can be 
generalized Bell inequalities for every possible deterministic 
choice of assignments of outcomes $\xi_k$ to Charlies measurement
settings, so we have to find all the generalized Bell inequalities 
for every of these choices. Fixing one choice, we obtain one extended behavior for every saturating behavior of the inequality we seek to generalize. 

Additionally, we require that generalized Bell inequalities should obey symmetries that are characteristic for the inequality they generalize. 
This narrows down the search and increases the similarity between an inequality and its generalizations.

In the following, we present generalizations of several inequalities 
that will be discussed in their respective sections. For every 
inequality we found, we conduct the same numerical analysis. We 
compute the quantum mechanical violation for qubits, for qutrits
and for the second and (if possible) third level of the NPA-hierarchy
\cite{navascues2007}. For the latter, we used the ncpol2sdpa package 
by Peter Wittek \cite{wittek2015}. For the qubit and qutrit violations, 
we provide lower bounds using a seesaw algorithm that alternates
between optimizing the measurement settings of one of the parties and optimizing the state. Every of these optimization steps is a semidefinite program. Details can be found in Appendix~A.

For meaningful comparisons between the inequalities, we compute four
quantities for every Bell inequality $\langle \mathcal B \rangle \ \equiv \sum_{ijk\neq 000} -b_{ijk} \langle A_i B_j C_k \rangle \le b_{000} $ that 
will guide the following discussion. 

The first quantity is the {\it relative qutrit violation}  by 
which the maximal value of the Bell expression $\mathcal B$ achievable 
with qutrits exceeds the maximal classical value. It is defined as
\begin{align}
m_Q& = \frac{\max_{\text{qutrit}} \expect{\mathcal B}}{b_{000}} - 1 \label{eq-m1}.
\end{align}
We are interested in this quantity as a signature to identify 
inequalities whose quantum violation is particularly strong. 
In this regard, it would be more informative to consider the 
quantum violation with a higher-dimensional system. However, 
the computations become more demanding which is why we restrict 
our quantum mechanical analysis to qubit and qutrit systems.

The second quantity we consider is the {\it algebraic-classical ratio}
that is defined analogously and quantifies, by how much the 
algebraic maximum of the Bell expression exceeds the classical bound 
in relation to the classical bound. It is calculated as
\begin{align}
m_A &= \frac{\sum_{i,j,k} |b_{ijk}| - b_{000}}{b_{000}}.
\end{align}

As a third interesting quantity, we compute the {\it NPA-qutrit 
ratio} that quantifies the relative margin between the 
maximal value of the Bell expression in the third level of the 
NPA hierarchy and the maximal value achievable with qutrits. For 
some of the Bell inequalities we found, it was unfeasible to
compute the third level of the NPA hierarchy. In this situation 
we resorted to the value obtained by the second level of the NPA 
hierarchy. Every time this is the case it is stated explicitely. 
The NPA-qutrit ratio is calculated as
\begin{align}
m_N &= \frac{\max_{\text{NPA}} \expect{\mathcal B}}{\max_{\text{qutrit}} \expect{\mathcal B}} - 1.
\end{align}

Lastly, we consider the {\it qutrit-qubit ratio} $m_{32}$ by which 
the maximal value of
the Bell expression $\mathcal B$ achieved by qutrits exceeds the corresponding
value for qubits. Mathematically, it is defined as
\begin{align}
m_{32} &= \frac{\max_{\text{qutrit}} \expect{\mathcal B}}{\max_{\text{qubit}}
\expect{\mathcal B}} - 1.\label{eq-m4}
\end{align}
It is natural to state these relative margins in percent, in which case $m_{32}
= 10\%$ means that the maximal value of the Bell expression is 
$10\%$ larger for qutrit systems than for qubit systems.

In the following, we report those inequalities that exhibit the
biggest value for one of the four relative quantities. A big 
qutrit-qubit ratio $m_{32}$ makes the inequality potentially 
interesting for experimental discrimination between qubit and 
qutrit states (i.e. device-independent dimension witnessing), 
while a large NPA-qutrit ratio $m_N$ suggests that the maximal 
quantum mechanical violation of the inequality may not be achievable 
with qutrit states. A large relative qutrit violation $m_Q$ hints at 
a particularly strong violation of the inequality in quantum mechanics,  while a large algebraic-classical ratio $m_A$ shows that observers that play by the rules of classical physics are especially limited in 
obtaining a large expectation value for the Bell expression in 
comparison with hypothetical observers who do not suffer any
physical limitations.

Finally, while we defined the quantities in this subsection for
three parties, all the definitions can easily be extended to
four-party Bell inequalities.

\subsection{Investigation of I3322 generalizations}
The I3322 inequality is the next simplest Bell inequality to consider
after the standard CHSH inequality and it is the only facet-defining 
Bell inequality in the bipartite scenario with three dichotomic measurements per party (besides lifted versions of the CHSH inequality). 
It reads
\begin{align}
\mean{A_1} & - \mean{A_2} + \mean{B_1} - \mean{B_2} - 
\mean{(A_1 - A_2)(B_1 - B_2)}
\nonumber
\\ 
& + \mean{(A_1 + A_2)B_3} + \mean{A_3(B_1 + B_2)}
\le 4.
\label{eq-i3322}
\end{align}
The I3322 inequality, or I3322 for short, was first discovered by
Froissart in 1981 \cite{froissart1981} and more than twenty years 
later rediscovered by Sliwa \cite{sliwa2003} and, independently, 
Collins and Gisin \cite{collins2004}. It has sparked interest 
because its maximal quantum violation is not achievable with 
qubits. Curiously, there is no difference in the maximal quantum 
violations achieved by qubits and qutrits. In fact, the local 
dimensions of Alice's and Bob's must be at least twelve in order 
to observe a larger quantum violation than the one achievable 
with qubits \cite{pal2010}. In a previous work 
we already provided a list of 3050 classes of Bell 
inequalities that generalize the I3322 inequality
\cite{bernards2020}. In this paper we supply a list of 
all the inequalities including our numerical results 
in the Supplementary Material (see Appendix \ref{app-supp} for details). For simplicity, we adopt 
a notation that takes into account that the inequalities
are symmetric under arbitrary permutations of parties. We 
write $(ijk) := \langle A_i B_j C_k \rangle + \text{permutations 
that yield different terms}$. For example 
$(011) = \langle A_1 B_1 \rangle + \langle B_1 C_1 \rangle + \langle A_1 C_1\rangle $. 

Let us discuss these inqualities. To start with a rough overview, 
the classical bounds 
$b_{000}$ of the Bell inequalities range from 8 to 321. 
Fig.~\ref{fig-i3322bars} depicts the classical bound for 
every Bell inequality and how much this bound is exceeded 
by the quantum states we found as well as the upper bound 
thereof provided by the NPA hierarchy. We find that all of 
the inequalities can be violated for qubit states. The 
inequalities are ordered by simplicity, measured as
length of their expression in the above stated notation.

\begin{figure}[t]
\includegraphics[width=.53\textwidth]{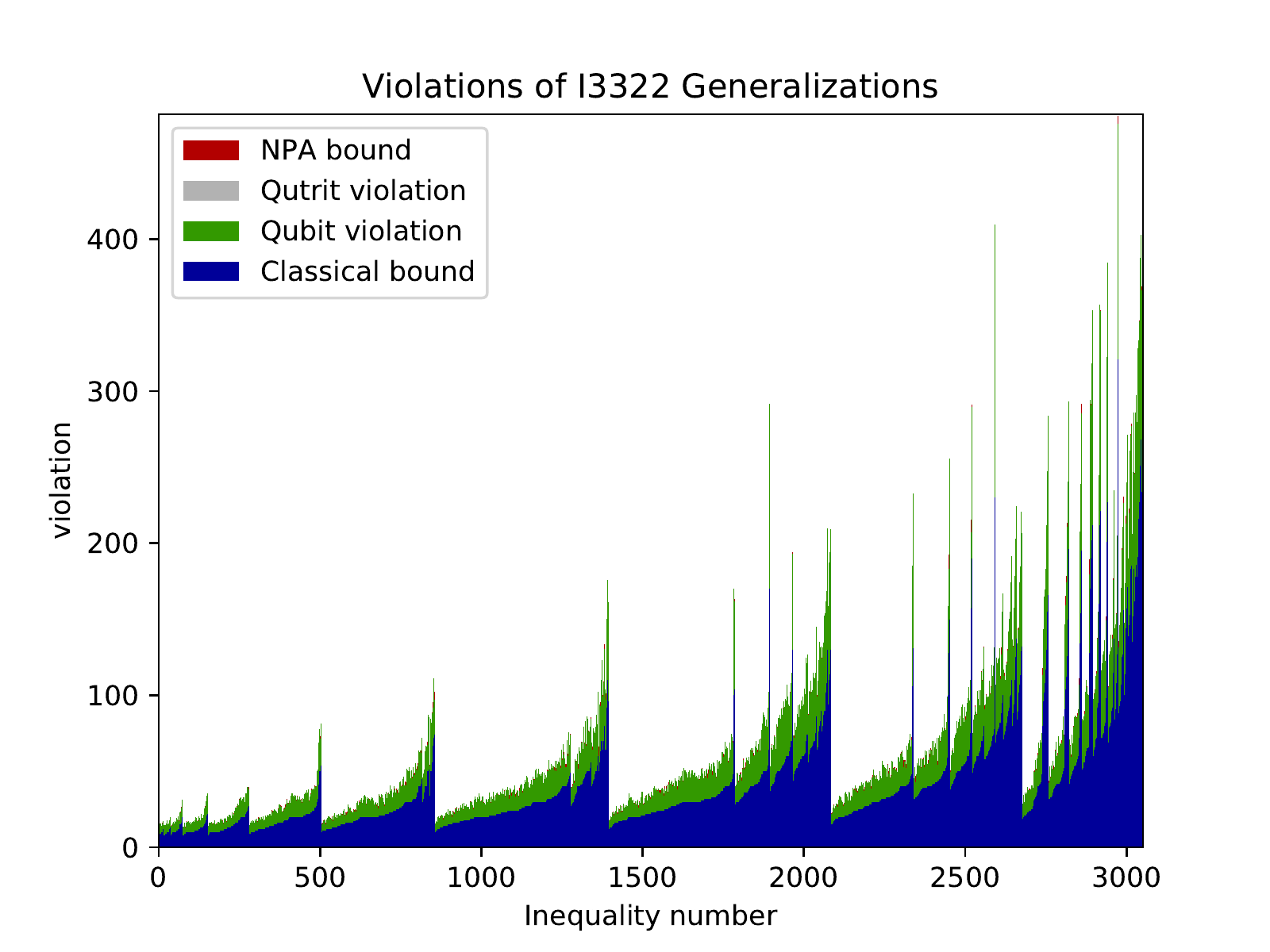}
\caption{The classical bounds of the inequalities 
$\sum_{ijk\neq 000} -b_{ijk} \langle A_i B_j C_k \rangle \le b_{000}$
 and their violation by qubits,
qutrits and the NPA hierarchy of third level. The Bell inequalities are sorted by the number of terms they incorporate. The orange and blue bars that represent the qutrit violations and NPA bounds can barely be
seen because they are mostly hidden behind the bars representing 
the qubit violations.}
\label{fig-i3322bars}
\end{figure}

Now let us consider some specific Bell inequalities. The largest 
qutrit-classical ratio $m_Q$ is achieved for Bell inequality 
number 1 as listed in the Supplementary Material, see Appendix \ref{app-supp}, 
which reads
\begin{align}
 & (110) + (210) - (211) - (220)  \notag \\
&- (222) +2 (331) +2 (332) \le 8.
\label{i3322-1}
\end{align}
For this inequality, the maximal value attainable with qubits coincides with the
upper bound of 16 given by the
NPA-hierarchy. It is worth noting that this inequality reduces to Mermin's
inequality if the first two settings of each party are chosen equal. The
inequality then is maximally violated for the Greenberger-Horne-Zeilinger state
$|GHZ\rangle = (|000\rangle + |111\rangle)/\sqrt{2}$ and the optimal measurement settings
for Mermin's inequality. In fact, this has been already discussed
in Ref.~\cite{bernards2020} and more details on states and settings 
can be found there. 

We observe the biggest qutrit-qubit ratio $m_{32}$ for Bell
inequality number 400, which reads
\begin{align}
& +5 (100) + (110) -5 (111) +3 (200) -3 (210) +3 (211)\notag \\
 & -2 (220) +2 (221) -2 (300) + (310) + (311)\notag \\
 & +2 (322) - (330)\notag \\
 & - (333)\le 18
\label{i3322-400}
\end{align}
For qubits, QM can lead to a violation of $20.928$, whereas 
for qutrits a maximal violation of $21.157$ can be obtained, 
which is less than the value $21.238$ obtained using the third
level of the NPA-hierarchy.

We find the biggest NPA-qutrit ratio $m_N$ for Bell inequality 
number 1507, which reads
\begin{align}
&8 (100) -4 (110) +3 (111) +4 (200) -3 (210) \notag\\
&+2 (211) - (220) +2 (221) -2 (222) - (300) - (310) \notag\\
&+3 (311) +2 (320) - (321) + (322) - (331) - (333) \le 21 .
\label{i3322-1507}
\end{align}
The third level of the NPA hierarchy yields the value $24.079$, 
which is a significant improvement over the value $26.299$ 
provided by the second level of the NPA hierarchy. However, 
with qutrits we could only achieve a violation of $23.603$. 
With qubits, the violation was even lower at $23.249$.

The largest algebraic-classical ratio $m_A$ between the classical bound $12$ and the algebraic bound $86$ is found for 
Bell inequality number 532. It reads
\begin{align}
&3 (100) -(110) - (111) -4 (200) +2 (210) -(220) \notag\\
&+ (221) + (222) -3 (300) +2 (310) - (320) + (321)\notag\\ 
&- (322) + (331) + (332) \le 12 
\label{i3322-532}
\end{align}
All four ratios are listed for the inequalities above in Tab.~\ref{tab-i3322}.

\begin{table}[t]
\tabcolsep=0ex
\rowcolors{2}{white}{lightgray}
\begin{tabularx}{.9\columnwidth}{XX!{\tabsep}lXXXX}
Eq. & Number &$\;$& $m_Q$ & $m_{32}$ & $m_N$ & $m_A$ \\\toprule
(\ref{i3322-1}) & 1 && 100.0 & 0.0 & -0.0 & 250.0 \\
(\ref{i3322-400}) &400 && 17.54 & 1.09 & 0.38 & 433.33 \\
(\ref{i3322-1507}) & 1507 && 12.4 & 0.61 & 2.01 & 514.29 \\
(\ref{i3322-532}) & 532 && 18.69 & 0.0 & 0.35 & 616.67 \\\bottomrule
\end{tabularx}

\caption{This table shows the relative qutrit violation $m_Q$, 
the qutrit-qubit ratio $m_{32}$, the NPA-qutrit ratio $m_N$, 
and the algebraic-classical ratio $m_A$ as well as those 
generalizations of the I3322 inequality for which one of 
these margins is the largest. The values of the margins 
is stated in percent. For example, for inequality number 1, 
the maximal value of the Bell expression is 16 for qutrit 
systems, exceeding the classical bound of 8 by 8, yielding $m_Q
= 100\%$.}
\label{tab-i3322}
\end{table}

\subsection{Three-party generalizations of the I4422 inequality}
When Collins and Gisin discovered the I3322 inequality, they noticed that
this Bell inequality is a member of a family of Bell inequalities which they called I$mm$22, where $m \geq 2$. This family contains the CHSH inequality, or I2222, as a special case. Inequalities of this family 
with more measurement settings generalize inequalities with fewer measurement settings in the following sense: If one substitutes 
$A_1 = 1$ and $B_3 = 1$ in the I3322 inequality, the inequality
effectively reduces to a CHSH inequality. Essentially the same 
procedure also works for the I4422 inequality \cite{collins2004}. 
In the notation introduced above, it reads
\begin{align}
& 2\, (01) + 2\, (02) + (03) - (11) - (12)\notag\\ -& (13) - (14) - (22) - (23)
+ (24) + (33) \le 7.
\label{eq-i4422}
\end{align}
If we substitute $A_3 = B_3 = 1$, followed by $A_2 \rightarrow -A_2$, $B_2
\rightarrow -B_2$ and $A_4 \rightarrow -A_3$, $B_4 \rightarrow -B_3$, we arrive exactly at the I3322 inequality as stated in Eq.~(\ref{eq-i3322}). 

In contrast to the I3322 inequality we observe a difference between the maximal qubit-violation and the maximal qutrit-violation. For qubits, the largest value we could achieve for the Bell expression is $8$, whereas for qutrits it is $8.15$, which matches the upper bound obtained using the third level of the NPA hierarchy up to numerical precision. Obtaining a value of $8$ is compatible with the above mentioned contraints that will reduce the inequality to the I3322 inequality. This means that the additional measurement setting of I4422 seems to lead to no advantage 
for achieving a large violation of the inequality, as long as only qubits are concerned. Perhaps surprisingly, the generalizations of
I4422 we are going to present do not seem to inherit the feature 
of being able to discriminate between qubit and qutrit systems by 
means of their maximal violations.

Before presenting some generalizations of I4422, we should clarify 
that we did not tackle this problem in full generality. When looking 
for generalizations of the I3322 inequality to more parties, the only symmetry we demanded was invariance under permutations of parties. 
Finding generalizations of the I4422 inequality is, however, computationally more involved, which is why we demand a second 
symmetry in order to make the problem tractable. When choosing a 
symmetry, one needs to be careful to not impose too strong constraints. 
For example, a three-party inequality that is invariant under party permutations can never be the generalization of a two party inequality that is not symmetric with respect to the parties. We therefore have to take the symmetries of the inequality we seek to generalize, in our case I4422, into consideration.

Besides being invariant under exchange of the parties $A$ and $B$, 
I4422 has a
second symmetry, namely, if we swap Alice's first and second setting while
simultaneously relabeling the outcomes of Bob's fourth measurement, then this
leaves the inequality invariant. For convenience, we write the symmetries in the
following symbolic way: 
\begin{enumerate}
\item
$ A \leftrightarrow B$
\item $ A_1 \leftrightarrow A_2,
\quad B_4 \rightarrow -B_4$. 
\end{enumerate}

For three-party generalizations of I4422, the following two symmetries seem
natural. The first one is
\begin{enumerate}
\item $ A \leftrightarrow B $
\item $ A \leftrightarrow C $
\item $ A_1 \leftrightarrow A_2$ , $ B_4 \rightarrow -B_4 $, $ C_4 \rightarrow
-C_4$.
\end{enumerate}
However, there exist no non-trivial facet inequalities that generalize I4422 and 
at the same time meet the above conditions.

We therefore instead look for I4422 generalizations with the following symmetry:
\begin{enumerate}
\item $ A \leftrightarrow B $
\item $ A \leftrightarrow C $
\item $ A_1 \leftrightarrow A_2$ , $ B_1 \leftrightarrow B_2 $, $ C_4 \rightarrow
-C_4$.
\end{enumerate}
The first two symmetries together establish that the generalizations we are
going to find are symmetric under arbitrary permutations of the three parties.
The third symmetry resembles the second symmetry of I4422.
In total, there are 13 classes of Bell inequalities that generalize I4422 and
also exhibit the above symmetries. They are presented in Appendix B.

\begin{figure}[t]
\includegraphics[width=.5\textwidth]{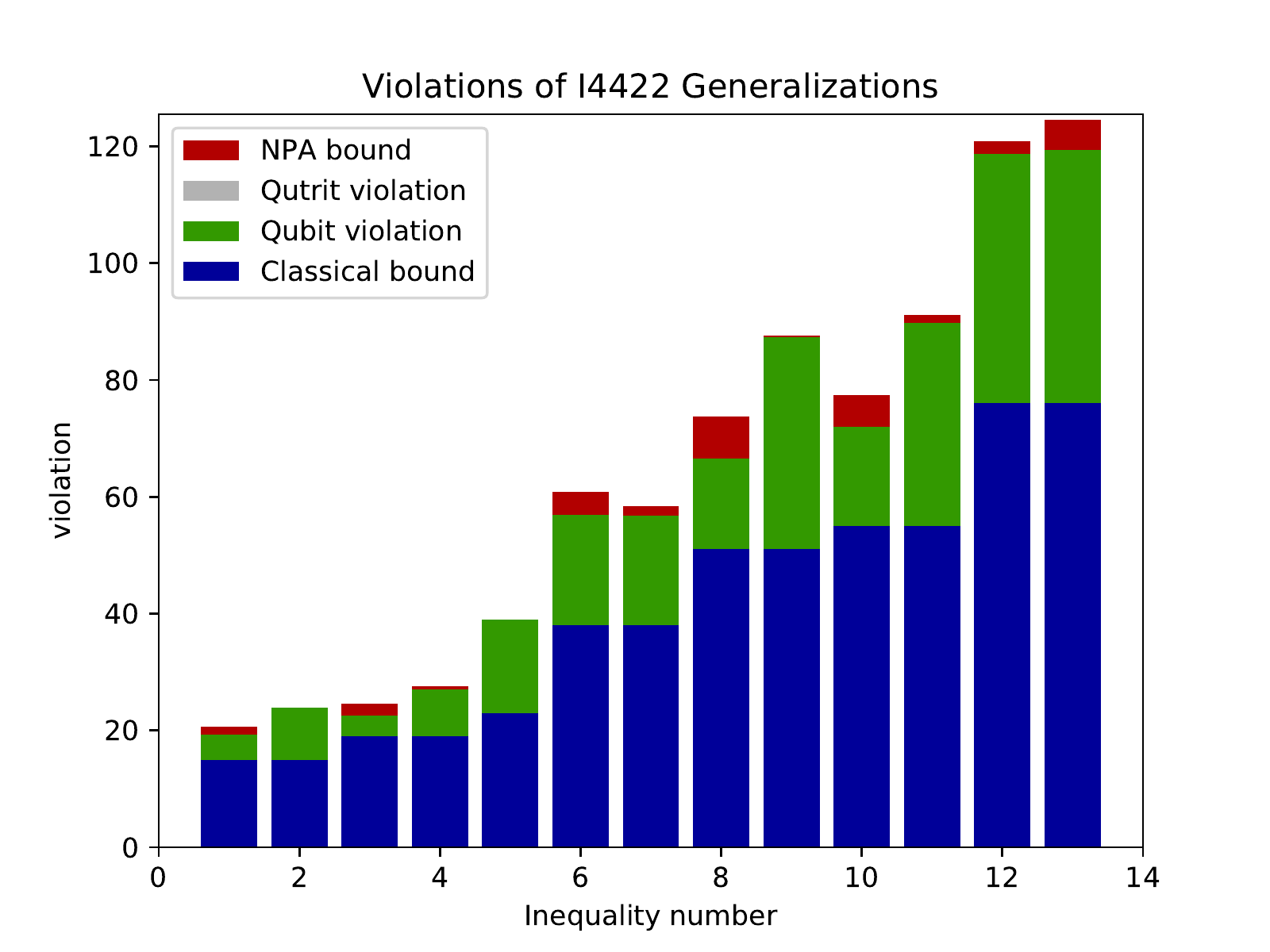}
\caption{The classical bounds of the I4422 generalizations
 and their violation by qubits,
qutrits and the NPA hierarchy (second level).}
\label{fig-i4422bars}
\end{figure}

Fig.~\ref{fig-i4422bars} shows
the classical bounds and the violations we found for the 13 inequalities.
The more complex scenario of tripartite
generalizations of I4422 also comes with the drawback that the third level of the NPA
hierarchy is already hard to compute. We therefore only compute the second level of the NPA hierarchy.
Tab.~\ref{tab-i4422} shows the relative margins as defined in Eqs.~(\ref{eq-m1} -
\ref{eq-m4}). For all of the inequalities the maximal qubit and qutrit
violations we found were the same. However, we cannot be certain that we managed
to find the maximal qutrit violation, except from the case of inequality number
5, where the quantum violations attained the upper bound given by the NPA
hierarchy up to numerical precision.

\begin{table}[t]
\tabcolsep=0ex
\rowcolors{2}{white}{lightgray}
\begin{tabularx}{.9\columnwidth}{XX!{\tabsep}lXXXX}
Eq. & Number &$\;$& $m_Q$ & $m_{32}$ & $m_N$ & $m_A$ \\\toprule
(\ref{eq-i4422-1}) & 1 & &28.87 & 0.0 & 6.81 & 360.0 \\ 
(\ref{eq-i4422-2}) & 2 & &59.05 & 0.0 & 0.24 & 413.33 \\
(\ref{eq-i4422-3}) & 3 & &18.39 & 0.0 & 9.49 & 315.79 \\
(\ref{eq-i4422-4}) & 4 & &42.11 & 0.0 & 2.18 & 357.89 \\
(\ref{eq-i4422-5}) & 5 & &69.57 & 0.0 & 0.0 & 434.78 \\
(\ref{eq-i4422-6}) & 6 & &49.83 & 0.0 & 6.82 & 400.0 \\
(\ref{eq-i4422-7}) & 7 & &49.56 & 0.0 & 2.75 & 378.95 \\
(\ref{eq-i4422-8}) & 8 & &30.39 & 0.0 & 10.85 & 341.18 \\
(\ref{eq-i4422-9}) & 9 & &71.33 & 0.0 & 0.33 & 403.92 \\
(\ref{eq-i4422-10})& 10& & 30.87 & 0.0 & 7.56 & 327.27 \\
(\ref{eq-i4422-11})& 11& & 63.34 & 0.0 & 1.39 & 385.45 \\
(\ref{eq-i4422-12})& 12& & 56.21 & 0.0 & 1.76 & 373.68 \\
(\ref{eq-i4422-13})& 13& & 57.12 & 0.0 & 4.26 & 405.26 \\\bottomrule
\end{tabularx}
\caption{This table shows the relative qutrit violation $m_Q$, the qutrit-qubit ratio $m_{32}$, the NPA-qutrit ratio  $m_N$, and the algebraic-classical ratio $m_A$ for every generalization of I4422 that satisfies the symmetry described in the text. The values of the ratios are expressed 
in percent. Numbers refers to the number under which the corresponding inequality is listed in the Appendix B.}
\label{tab-i4422}
\end{table}

\subsection{Three-party hybrid CHSH-I3322 generalizations}

To show that also hybrid scenarios can be studied, we 
find all 476 
 three-party facet Bell inequalities with the following
properties: Alice and Bob have three measurement settings
each and Charlie has
only two. Also, there
are deterministic assignments to the outcomes of Alice's measurements, such that
the inequality effectively reduces to the CHSH inequality
\begin{align}
\mean{A_1 B_1} +
\mean{A_1 B_2} +
\mean{A_2 B_1} -
\mean{A_2 B_2} 
\le 2.
\label{eq-chsh2}
\end{align}
Further, there are deterministic assignments to the outcomes of Charlie's
measurements, such that the inequality reduces to the I3322 inequality. Lastly,
we require the inequality to be symmetric under exchange of Alice and Bob. These
conditions lead to 476 inequalities, these are given
in the Supplementary Material, see Appendix \ref{app-supp}.

As one can see in Fig.~\ref{fig-hybridbars} all of the Bell inequalities are violated in quantum
mechanics. 
The strongest violation by qutrit states is achieved for inequality $47$. It reads
\begin{align}
 & \langle C_1\rangle+ \langle C_2\rangle+ \langle B_1\rangle\notag \\ 
&- \langle B_1C_2\rangle- \langle B_2\rangle+ \langle B_2C_1\rangle+ \langle A_1\rangle\notag \\ 
&- \langle A_1C_2\rangle+2 \langle A_1B_1C_2\rangle+ \langle A_1B_2\rangle- \langle A_1B_2C_1\rangle\notag \\ 
&- \langle A_1B_3C_1\rangle- \langle A_1B_3C_2\rangle- \langle A_2\rangle+ \langle A_2C_1\rangle\notag \\ 
&+ \langle A_2B_1\rangle- \langle A_2B_1C_1\rangle- \langle A_2B_2\rangle+2 \langle A_2B_2C_1\rangle\notag \\ 
&- \langle A_2B_2C_2\rangle+ \langle A_2B_3\rangle- \langle A_2B_3C_2\rangle- \langle A_3B_1C_1\rangle\notag \\ 
&- \langle A_3B_1C_2\rangle+ \langle A_3B_2\rangle- \langle A_3B_2C_2\rangle- \langle A_3B_3\rangle\notag \\ 
&- \langle A_3B_3C_1\rangle \le 6 
\label{hybrid-47}
\end{align}

\begin{figure}[t]
\includegraphics[width=.5\textwidth]{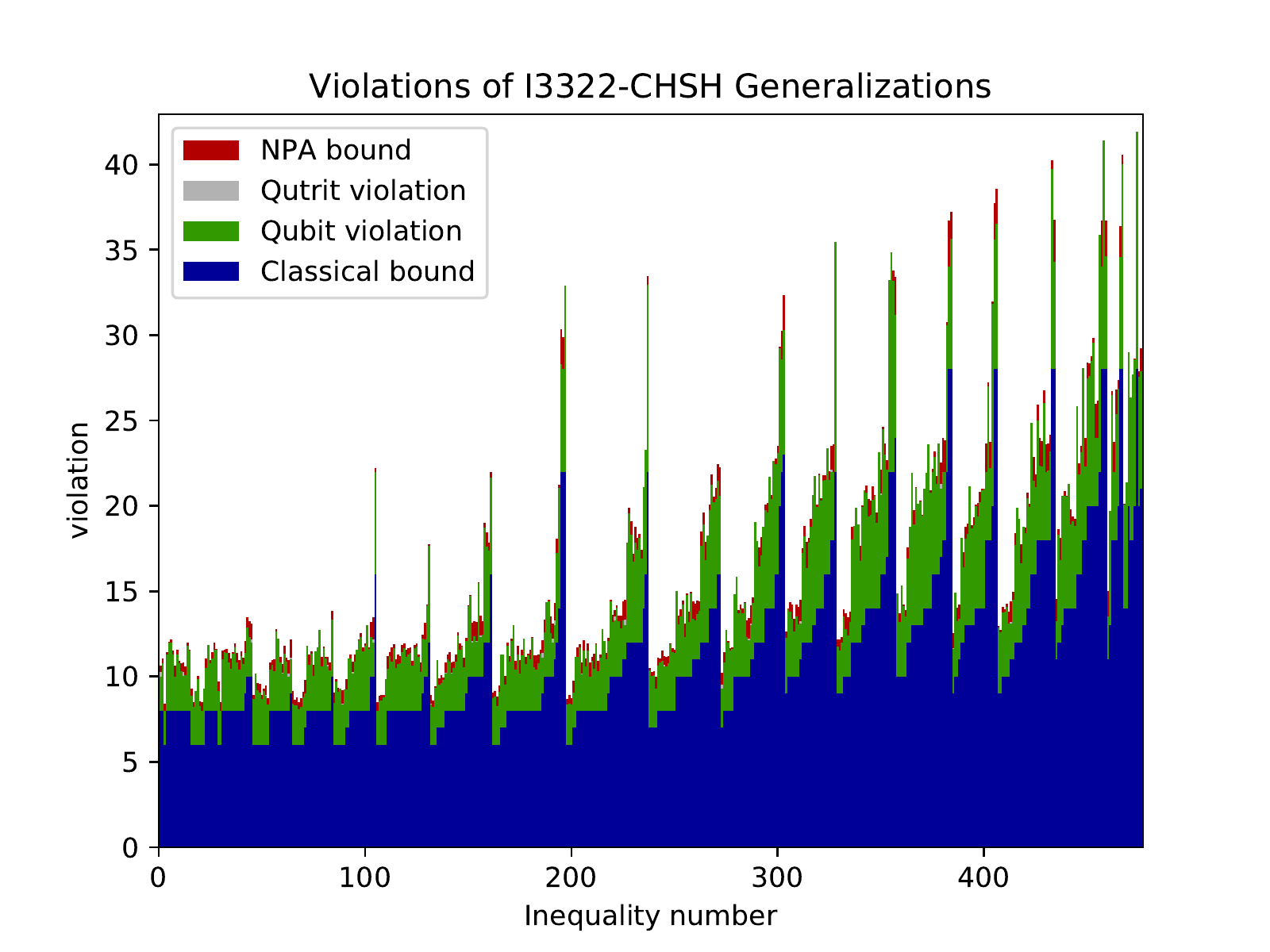}
\caption{The classical bounds of the inequalities 
$\sum_{ijk\neq 000} -b_{ijk} \langle A_i B_j C_k \rangle \le b_{000}$
and their violation by qubits,
qutrits and the NPA hierarchy of third level. See text for further details.}
\label{fig-hybridbars}
\end{figure}

We find the largest NPA-qutrit ratio $m_N$ for Bell inequality
number 314. It reads
\begin{align}
 &5\langle C_1\rangle +\langle C_2\rangle +2\langle B_1\rangle -2\langle
B_1 C_1\rangle\notag \\
&-2\langle B_2\rangle +2\langle B_2 C_1\rangle +\langle B_3 C_1\rangle 
+\langle B_3 C_2\rangle\notag\\
& +2\langle A_1\rangle -2\langle A_1 C_1\rangle +2\langle
A_1B_1C_1\rangle -2\langle A_1B_1C_2\rangle\notag\\
& +\langle A_1 B_2\rangle  -2\langle A_1B_2C_1\rangle +\langle
A_1B_2C_2\rangle\notag\\
& +\langle A_1 B_3\rangle
-2\langle A_1B_3C_1\rangle +\langle A_1B_3C_2 \rangle -2\langle A_2\rangle \notag\\&+2\langle A_2 C_1\rangle
+\langle A_2 B_1\rangle 
-2\langle A_2B_1C_1\rangle +\langle A_2B_1C_2\rangle\notag\\& -2\langle A_2 B_2\rangle 
+3\langle A_2B_2C_1\rangle +\langle A_2B_2C_2\rangle +\langle A_2 B_3\rangle \notag\\
&-3\langle A_2B_3C_1\rangle +\langle A_3 C_1\rangle 
+\langle A_3 C_2\rangle +\langle A_3 B_1\rangle\notag \\ &-2\langle A_3B_1C_1\rangle 
+\langle A_3B_1C_2\rangle +\langle A_3 B_2\rangle\notag \\& 
-3\langle A_3B_2C_1\rangle -2\langle A_3B_3C_1\rangle -2\langle A_3B_3C_2\rangle \le 12 . 
\label{hybrid-314}
\end{align}
We achieve the same violation for qubit and qutrit systems at $16.339$. The
third level of the NPA hierarchy yields $16.488$, improving the upper bound of $17.870$ from the second level NPA hierarchy significantly. Because the qubit and qutrit violations are the same, one may conjecture that the upper bound for the quantum mechanical violation provided by the NPA hierarchy is not tight for the third level.

We observe the biggest gap between qubits and qutrits for Bell inequality number 1. It reads
\begin{align}
&2 \langle B_1\rangle-2 \langle B_2\rangle+2 \langle A_1\rangle- \langle A_1B_1\rangle\notag \\
 &+ \langle A_1B_1C_1\rangle+ \langle A_1B_2\rangle- \langle A_1B_2C_1\rangle+2 \langle A_1B_3C_2\rangle\notag \\
 &-2 \langle A_2\rangle+ \langle A_2B_1\rangle- \langle A_2B_1C_1\rangle- \langle A_2B_2\rangle\notag \\
 &+ \langle A_2B_2C_1\rangle+2 \langle A_2B_3C_2\rangle+2 \langle A_3B_1C_2\rangle+2 \langle A_3B_2C_2\rangle\notag \\
 &\le 8
\label{hybrid-1}
\end{align}
For this inequality we find a 
value of $10.000$ for qubits, whereas the violation of $10.286$ that we find for qutrits
coincides with the value of the third level of the NPA hierarchy up to numerical
precision.

Lastly, the biggest gap between the classical and the algebraic bound occurs for inequality $198$, which reads
\begin{align}
 &3 \langle C_1\rangle+ \langle C_2\rangle+2 \langle B_1\rangle\notag \\ 
&- \langle B_1C_1\rangle- \langle B_1C_2\rangle- \langle B_2\rangle+2 \langle B_2C_1\rangle\notag \\ 
&+ \langle B_2C_2\rangle+ \langle B_3\rangle- \langle B_3C_2\rangle+2 \langle A_1\rangle\notag \\ 
&- \langle A_1C_1\rangle- \langle A_1C_2\rangle- \langle A_1B_1\rangle+ \langle A_1B_1C_1\rangle\notag \\ 
&+ \langle A_1B_2\rangle- \langle A_1B_2C_1\rangle- \langle A_1B_3C_1\rangle+ \langle A_1B_3C_2\rangle\notag \\ 
&- \langle A_2\rangle+2 \langle A_2C_1\rangle+ \langle A_2C_2\rangle+ \langle A_2B_1\rangle\notag \\ 
&- \langle A_2B_1C_1\rangle- \langle A_2B_2\rangle- \langle A_2B_2C_2\rangle+ \langle A_2B_3\rangle\notag \\ 
&- \langle A_2B_3C_1\rangle+ \langle A_3\rangle- \langle A_3C_2\rangle- \langle A_3B_1C_1\rangle\notag \\ 
&+ \langle A_3B_1C_2\rangle+ \langle A_3B_2\rangle- \langle A_3B_2C_1\rangle \le
6.
\label{hybrid-198}
\end{align}
Compared to the classical bound of $6$, the algebraic bound is almost seven
times as large at $40$.
The numerical results are summarized in Tab.~\ref{tab-hybrid}.

\begin{table}[t]
\tabcolsep=0ex
\rowcolors{2}{white}{lightgray}
\begin{tabularx}{.9\columnwidth}{XX!{\tabsep}lXXXX}
Eq. & Number &$\;$& $m_Q$ & $m_{32}$ & $m_N$ & $m_A$ \\\toprule
(\ref{hybrid-47}) & 47 && 69.23 & 0.0 & 0.0 & 400.0 \\
(\ref{hybrid-1}) & 1 && 28.57 & 2.86 & 0.0 & 200.0 \\
(\ref{hybrid-314}) & 314 && 36.16 & 0.0 & 0.91 & 433.33 \\
(\ref{hybrid-198}) & 198 && 39.71 & 0.0 & 0.0 & 566.67 \\\bottomrule
\end{tabularx}
\label{tab-hybrid}
\caption{This table shows the relative qutrit violation $m_Q$, the qutrit-qubit ratio $m_{32}$, the NPA-qutrit ratio $m_N$, and the algebraic-classical ratio
$m_A$ as well as those hybrid generalizations of the I3322 and the CHSH inequality for which one of
these margins is the largest. The values of the ratios is stated in percent.}
\end{table}

\subsection{Four-party generalizations of a Guess-Your-Neighbors-Input inequality}

In this section we present generalizations of a guess your neighbors input
(GYNI) Bell inequality. GYNI inequalities give a classical upper bound for the winning probability of the nonlocal game that goes by the same name. The game is played with $N$ parties that are arranged in a ring and collaboratively play against the instructor. The instructor will supply every participant $i$ with an input bit $x_i$ and each participant then has to guess the input of his left neighbor. The game is won if the output $a_i$ of every player output matches the input
$x_{i+1}$ of their left neighbor. In order to achieve this goal, the players may
agree on a strategy before the game and they are also provided with the
probability distribution $q(\vec x)$ according to which the instructor chooses
the inputs. In the quantum version of the game, the parties additionally possess
a part of a shared quantum state. During the game, no communication is permitted. A GYNI inequality is 
an inequality of the form
\begin{align}
\sum_{\vec x} q(\vec x) P( a_i = x_{i+1} \forall i | \vec x) \le \omega_c,
\end{align}
whereby the left-hand-side expresses the average winning probability, which is
bounded by the average winning probability of the best classical strategy
$\omega_c$. Notably, Almeida et al \cite{almeida2010} showed that GYNI Bell
inequalities are never violated by quantum theory and discovered that a facet defining
Bell-inequality previously found by \'Sliwa is a GYNI inequality. This inequality,
written in observable notation, reads
\begin{align}
  &\mean{A_1 B_1}  +\mean{A_2 B_1} +\mean{A_1 B_2 }+\mean{A_2 B_2}  \nonumber
\\&+\mean{A_1 C_1} -\mean{A_2 C_1} +\mean{B_1 C_1} +\mean{A_1 B_1 C_1}  \nonumber
\\&-\mean{B_2 C_1} -\mean{A_2 B_2 C_1}
+\mean{A_1 C_2} -\mean{A_2 C_2} -\mean{B_1 C_2}  \nonumber 
\\&+\mean{A_2 B_1 C_2} +\mean{B_2 C_2}
-\mean{A_1 B_2 C_2} \le 4  \label{eq-gyni}
\end{align}
and is listed as inequality number $10$ in the paper by \'Sliwa \cite{sliwa2003}. Here
we present its generalizations. The inequality has the following symmetries:
\begin{enumerate}
\item $A_1 \leftrightarrow A_2$, $B_1 \leftrightarrow B_2$, $C_1 \rightarrow
-C_1$, $C_2 \rightarrow -C_2$
\item $A_1 \leftrightarrow A_2$, $ C_1 \leftrightarrow C_2 $, $A_i \rightarrow
-A_i \forall i$, $B_i \rightarrow -B_i \forall i$.
\end{enumerate}

\begin{figure}[t]
\includegraphics[width=.5\textwidth]{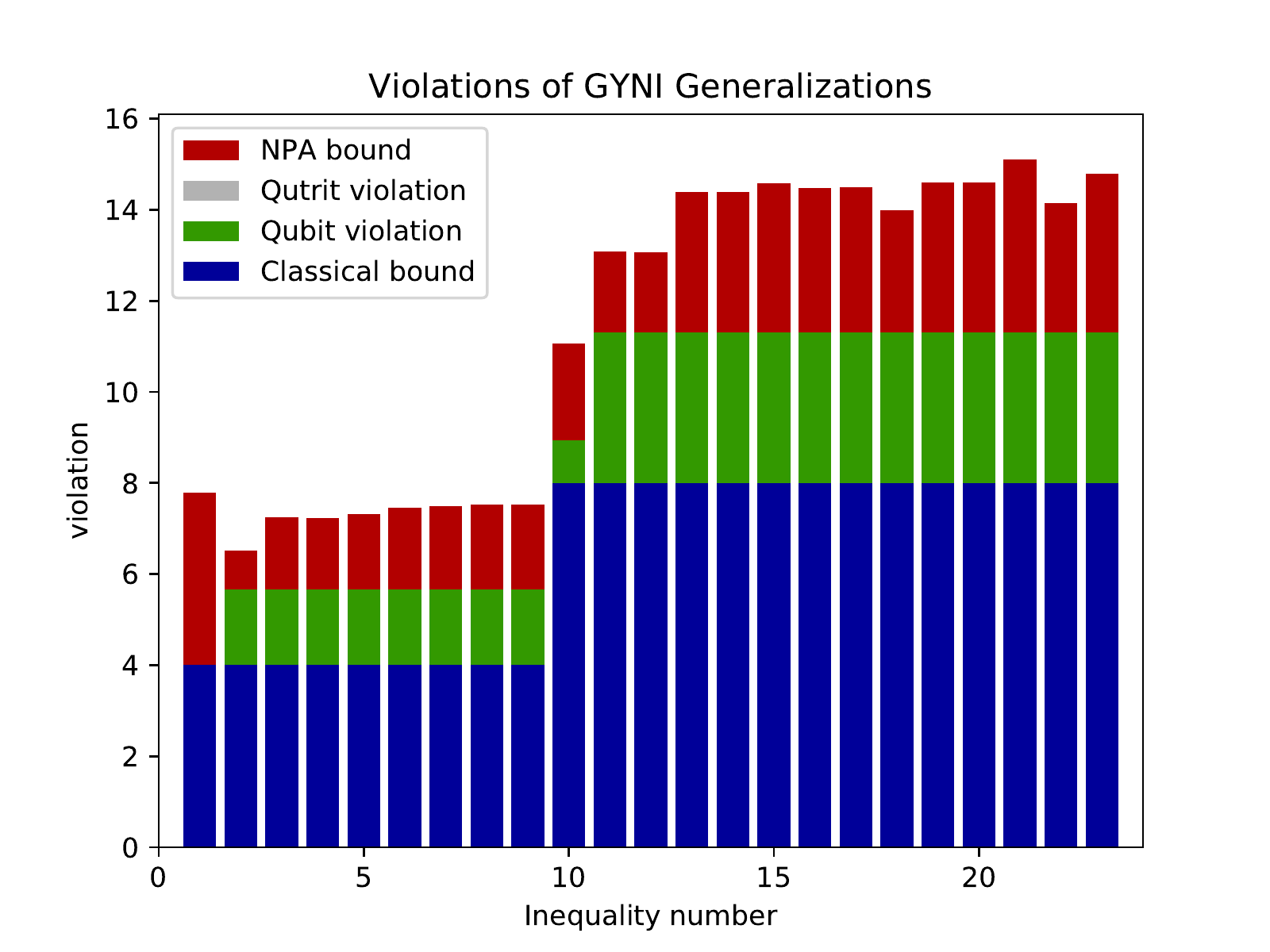}
\caption{The classical bounds of the inequalities 
$\sum_{ijkl\neq 0000} -b_{ijkl} \langle A_i B_j C_k D_l \rangle \le b_{0000}$
and their violation by qubits,
qutrits and the NPA hierarchy of third level.}
\label{fig-gynibars}
\end{figure}

We demand exactly the same symmetries for the generalization of the above
inequality. We find $23$ four-partite inequalities, the detailed expressions are
given in the Supplementary Material, see Appendix \ref{app-supp}. 
The first one is not a real four-partite Bell inequality, since it only includes a single measurement setting on the fourth party. In fact, it is of the form $ B (1 + D_1) \le 4(1 + D_1) $, where $B$ is the
left-hand-side of Eq.~(\ref{eq-gyni}) after $C_1$ and $C_2$ have been swapped and all outcomes of Bob have been relabeled. All the other inequalities are violated in quantum theory. Hence, none of the generalizations shares the characteristic feature of the  GYNI inequality.
Fig.~\ref{fig-gynibars} shows a plot with all classical bounds and their
violations for all 23 inequalities.

\section{Conclusion}
We discussed the cone-projection technique as a method to 
find facet Bell inequalities with normal vectors that 
obey linear constraints. We further introduced the concept 
of a generalization of a Bell inequality to more parties. 
The property of being a generalization of a specific Bell 
inequality can be formulated as a linear constraint on its 
normal vector. Using the cone-projection technique, we were able to find 3050
classes of generalizations of the I3322 inequality, 476 
classes of Bell inequalities that simultaneously generalize 
the CHSH inequality as well as the I3322 inequality, 13 
classes of Bell inequalities that generalize the I4422
inequality and 23 classes of Bell inequalities that 
generalize the Guess-Your-Neighbors-Input inequality 
first found in \cite{sliwa2003}. For all inequalities, 
we applied an extensive numerical analysis, providing
upper bounds to their quantum violations both for qubit 
and qutrit systems as well as upper bounds using the NPA 
hierarchy.

For future research, there are several open problems. First, 
it would be desirable to study some of the Bell inequalities
presented here in more detail. For instance, the question arises
whether they can be connected to some distributed information 
processing task. In addition, one may study the viability of an
experimental test of these inequalities. Second, it is very interesting
to study the cone-projection technique also for other scenarios, such
as  Bell inequalities that detect genuine multi-partite nonlocality or 
contextuality inequalities.

\acknowledgments
We thank Miguel Navascués, Chau Nguyen and Denis Rosset for
fruitful discussions. This work was supported by the
Deutsche Forschungsgemeinschaft (DFG, German Research 
Foundation, project numbers 447948357 and
440958198), the Sino-German Center for Research Promotion (Project M-0294), and the ERC (Consolidator Grant 683107/TempoQ).
FB acknowledges support from the House of Young Talents of the University
of Siegen.

\appendix

\section{\label{app-viol}
Details on finding lower bounds for qubit and qutrit
violations}
We provide lower bounds for the violations of the Bell inequalities achievable with qubits and qutrits, respectively. These lower bounds are violations achieved by specific states and measurements. 

Finding good lower bounds therefore involves two optimizations: One optimization over the possible states and another optimization over the measurement settings. The objective -- that is the expectation value of the Bell operator for the given state -- is however not
linear, if we optimize over all measurements at once. We therefore break the
optimization over the measurement settings down further to optimizing over them
party by party, thereby rendering the objective linear. We represent each
measurement by a positive operator-valued measure (POVM). Such a optimization
problem that features an objective that is linear in its arguments which are
positive-semidefinite operators is called a semidefinite program (SDP) and can
be solved efficiently. Likewise, the optimization over the
state-space while keeping the measurement settings constant is also an SDP.

This prompts a seesaw-algorithm: After initializing the measurement settings to
(pseudo-)random values, we alternate between updating the quantum state and
updating the measurement settings one party at a time while keeping the quantum state
constant. In this way, the value of the objective is guaranteed to increase
monotonically and will eventually converge, if the objective is bounded.
However, it oftentimes only converges to a local minimum. This is a problem we
cannot avoid, but we can aim for a rather good local optimum by running the
seesaw-algorithm many times, starting at a different initial state every time.
To save computational resources, we do not wait until the objective values reach
high precision. Instead, we stop after a few iterations, select the most
promising instance of the optimization and only continue the seesaw-algorithm
for this selected instance until the improvements of the objective value fall below some threshold.

\section{I4422 generalizations \label{app-i4422gens}}
We find 13 classes of Bell inequalities that are symmetric under party
permutations, generalize I4422 and further
exhibit the symmetry
$A_1 \leftrightarrow A_2, B_1 \leftrightarrow B_2, C_4 \rightarrow -C_4$.
In the notation for symmetric Bell inequalities that we introduced in the main
text, they read
\begin{widetext}
\begin{align}
  - (100) -2 (110) - (200) -2 (210) -2 (220) + (300) - (310)&\notag\\  \qquad\quad 
- (320) + (330) + (331) + (332) -3 (333) +2 (441) -2 (442) &\le 15  \qquad\quad\label{eq-i4422-1}\\
  - (100) -2 (110) - (111) - (200) -2 (210) + (211) -2 (220) - (221) + (222) &\notag\\  \qquad\quad+ (300) - (310) - (320) + (330) + (331) + (332) -3 (333) +2 (441) -2
(442) &\le 15  \qquad\quad\label{eq-i4422-2} \\
  -2 (100) -2 (110) -2 (200) -2 (210) -2 (220) - (300) -2 (310) -2 (320) &\notag\\  \qquad\quad
+(330) - (333) +2 (441) -2 (442) &\le 19  \qquad\quad\label{eq-i4422-3} \\
  -2 (100) -2 (110) - (111) -2 (200) -2 (210) + (211) -2 (220) - (221)  &\notag\\  \qquad\quad
+(222) - (300) -2 (310) -2 (320) + (330) - (333) +2 (441) -2 (442) &\le 19  \qquad\quad\label{eq-i4422-4} \\
  +3 (100) -3 (110) +2 (111) +3 (200) -3 (210) -3 (220) +2 (221) + (300) - (310)&\notag\\  \qquad\quad + (311) - (320) + (321) + (322) + (330) - (331) - (332) + (333) -4
(441) +4 (442) &\le 23  \qquad\quad\label{eq-i4422-5} 
\end{align}
\begin{align}
  +3 (100) + (110) -3 (111) +3 (200) + (210) + (211) + (220) -3 (221) + (222) +4 (300)&\notag\\  \qquad\quad
 +3 (310) -3 (311) +3 (320) -3 (321) -3 (322) -6 (330) - (331) -
(332) +12 (333) +4 (441) -4 (442) &\le 38  \qquad\quad\label{eq-i4422-6} \\
  +3 (100) + (110) - (111) +3 (200) + (210) - (211) + (220) - (221) -
(222) +4 (300) +3 (310) -3 (311)&\notag\\  \qquad\quad +3 (320) -3 (321) -3 (322) -6 (330) - (331) -
(332) +12 (333) +4 (441) -4 (442) &\le 38  \qquad\quad\label{eq-i4422-7} \\
  +2 (100) -5 (110) +2 (200) -5 (210) -5 (220) + (300) -4 (310) +3 (311) -4 (320)&\notag\\  \qquad\quad +3 (321) +3 (322) + (330) -2 (331) -2 (332) -3 (333) -8 (441) +8 (442) &\le 51  \qquad\quad\label{eq-i4422-8} \\
  +2 (100) -5 (110) -4 (111) +2 (200) -5 (210) +4 (211) -5 (220) -4 (221) +4 (222)+ (300)&\notag\\  \qquad\quad  -4 (310) +3 (311) -4 (320) +3 (321) +3 (322) + (330) -2 (331) -2
(332) -3 (333) +8 (441) -8 (442) &\le 51  \qquad\quad\label{eq-i4422-9}
\end{align}
\begin{align}
  + (100) -5 (110) + (200) -5 (210) -5 (220) - (300) -5 (310) +3 (311) -5
(320)&\notag\\  \qquad\quad +3 (321) +3 (322) + (330) -3 (331) -3 (332) - (333) -8
(441) +8 (442) &\le 55  \qquad\quad\label{eq-i4422-10} \\
  + (100) -5 (110) -4 (111) + (200) -5 (210) +4 (211) -5 (220) -4 (221) +4 (222) - (300)&\notag\\  \qquad\quad -5 (310) +3 (311) -5 (320) +3 (321) +3 (322) + (330) -3 (331) -3
(332) - (333) +8 (441) -8 (442) &\le 55  \qquad\quad\label{eq-i4422-11} \\
  +7 (100) + (110) - (111) +7 (200) + (210) - (211) + (220) - (221) -
(222) +6 (300) +7 (310) -7 (311)&\notag\\  \qquad\quad +7 (320) -7 (321) -7 (322) -12 (330) - (331)
- (332) +22 (333) -8 (441) +8 (442) &\le 76  \qquad\quad\label{eq-i4422-12} \\
  +7 (100) + (110) -5 (111) +7 (200) + (210) +3 (211) + (220) -5 (221) 
+3 (222) +6 (300) +7 (310) -7 (311)&\notag\\  \qquad\quad +7 (320) -7 (321) -7 (322) -12 (330) - (331)
- (332) +22 (333) +8 (441) -8 (442) &\le 76.  \qquad\quad\label{eq-i4422-13}
\end{align}
\end{widetext}

\section{Content of the online material \label{app-supp}}
The further online material (included in the source files of this arxiv
submission) contains the generalizations of the I3322 inequality, the I4422
inequality, the Guess-Your-Neighbors-Input inequality (\eq{gyni}), and the
hybrid I3322-CHSH Bell inequalities. For convenience, the online material
includes four files, one for each class of Bell inequalities:
\begin{itemize}
 \item {\tt i3322gensquant.txt}  This file contains the generalizations of the
I3322 inequality.
  \item {\tt i4422gensquant.txt}  This file contains the generalizations of the
I4422 inequality.
  \item {\tt hybridgensquant.txt}  This file contains the
hybrid I3322-CHSH inequalities.
  \item {\tt gynigensquant.txt}  This file contains the generalizations of the
GYNI inequality in Eq.~(\ref{eq-gyni}).
\end{itemize}
Each file lists the Bell inequalities together with their respective algebraic
bounds as well as the bounds obtained from the second level of the NPA-hierarchy
and, with the exception of I4422 generalizations, also from the third level of
the NPA-hierarchy. Additionally, we provide lower bounds on the qubit violations and
qutrit violations as well as the corresponding qubit state and qubit settings
that lead to the listed qubit violation.

\bibliography{biblio}

\end{document}